\title{The Logos Categorical Approach to Quantum Mechanics: I.\\Kochen-Specker Contextuality and Global Intensive Valuations.}
\author{{\sc C. de Ronde}$^{1,2}$ and {\sc C. Massri}$^{3,4}$}
\date{}
\begin{document}

\bibliographystyle{plain}
\maketitle

\begin{center}
\begin{small}
1. Philosophy Institute Dr. A. Korn (UBA-CONICET) - UNAJ\\
2. Center Leo Apostel for Interdisciplinary Studies\\ Foundations of the Exact Sciences (Vrije Universiteit Brussel)\\
3. Institute of Mathematical Investigations Luis A. Santal\'o (UBA-CONICET)\\
4. University CAECE
\end{small}
\end{center}

\begin{abstract}
\noindent In this paper we present a new categorical approach which attempts to provide an original understanding of QM. Our logos categorical approach attempts to consider the main features of the quantum formalism as the standpoint to develop a conceptual representation that explains what the theory is really talking about ---rather than as problems that need to be bypassed in order to allow a restoration of a classical ``common sense'' understanding of {\it what there is}. In particular, we discuss a solution to Kochen-Specker contextuality through the generalization of the meaning of {\it global valuation}. This idea has been already addressed by the so called topos approach to QM ---originally proposed by Isham, Butterfiled and D\"oring--- in terms of {\it sieve-valued valuations}. The logos approach to QM presents a different solution in terms of the notion of {\it intensive valuation}. This new solution stresses an ontological (rather than epistemic) reading of the quantum formalism and the need to restore an objective (rather than classical) conceptual representation and understanding of quantum physical reality.
\end{abstract}
\begin{small}

{\bf Keywords:} {\em Categories, Logoi, contextuality, Kochen-Specker theorem, intensive valuation.}
\end{small}

\newtheorem{theo}{Theorem}[section]
\newtheorem{definition}[theo]{Definition}
\newtheorem{lem}[theo]{Lemma}
\newtheorem{met}[theo]{Method}
\newtheorem{prop}[theo]{Proposition}
\newtheorem{coro}[theo]{Corollary}
\newtheorem{exam}[theo]{Example}
\newtheorem{rema}[theo]{Remark}{\hspace*{4mm}}
\newtheorem{example}[theo]{Example}
\newcommand{\proof}{\noindent {\em Proof:\/}{\hspace*{4mm}}}
\newcommand{\qed}{\hfill$\Box$}
\newcommand{\ninv}{\mathord{\sim}} 
\newtheorem{postulate}[theo]{Postulate}

\bigskip

\bigskip

\bigskip

\bigskip

\bigskip

\bigskip

\bigskip

\section{Quantum Contextuality and the Representation of Reality}

Quantum contextuality presents one of most difficult problems in order to consider, within the orthodox formalism of QM, an objective representation of physical reality. If the choice of a context is necessary to define what is considered to be real, relativism enters the scene in a manner absolutely foreign to classical physical description. Objectivity is lost when the choice of a context appears as a necessary element in the definition itself of what can be considered to be the definite valued properties of a physical system. Every context defines its own ``relative reality''. As a consequence, the realist presuppositions according to which reality is one, preexists to measurements, and must be described independently of human consciousness or particular choices of agents becomes highly problematic \cite{deRonde17b}. 

One of us has argued elsewhere \cite{deRonde16c} that there exist two different notions of quantum contextuality addressed ---and often confused--- within the foundational literature. This distinction of contextuality will be useful for the purposes of the present article. The first, due to Bohr, is an epistemic notion of contextuality which grounds itself in the classical representation of experimental arrangements and the so called `wave-particle duality'. The second notion of contextuality is related to the Kochen-Specker (KS) theorem and the impossibility to interpret, within the orthodox quantum formalism, projection operators as definite valued (preexistent) properties. This second understanding of contextuality relates to an ontic questioning about the formalism of the theory and its possible conceptual representation ---beyond its mere reference to measurement outcomes or mathematical structures.\footnote{For a detailed analysis and discussion of the relation between conceptual representation, mathematical structures and measurement outcomes see \cite{deRonde17a}.} Let us discuss these two different notions of contextuality in some detail.  

The epistemic approach ---considered in general terms--- implies a perspective within philosophy of physics grounded on the empirical observation of subjects, i.e., a focus on the measurement process and outcomes. This was the approach taken by Bohr in his reply to the EPR paper \cite{Bohr35, EPR}. His analysis introduced a new idea of contextuality in line with an epistemic viewpoint. According to this idea, instead of the instrument playing a purely passive role, as it would classically, it was argued that there is in QM a complex interplay between system and instrument which ``creates something new''. We call the Bohrian notion of contexuality ``epistemic'' for it is grounded on the explicit reference to measurement situations with classical apparatuses in which classical phenomena can be addressed by an experimenter. Bohr remarked in many occasions there is an ``indispensable use of classical concepts in the interpretation of all proper measurements, even though the classical theories do not suffice in accounting for the new types of regularities with which we are concerned in atomic physics.'' Furthermore, [{\it Op. cit.}, p. 7], ``it would be a misconception to believe that the difficulties of the atomic theory may be evaded by eventually replacing the concepts of classical physics by new conceptual forms.'' In this respect, we remark that Borhian contextuality does not make reference to the quantum formalism itself. Since Bohr attempted to understand QM as a rational generalization of classical physics \cite{BokulichBokulich}, the focus of his analysis was centered in the description of complementary classical experimental situations and measurement results (see also for discussion \cite{Bacc14, Howard94}). His main idea was that the effect of the measurement apparatus is to ``create a value'' (of an observable), which did not exist before the measurement interaction. As John Bell \cite[p. 35]{Bell87} would later explain, rephrasing Bohr:  ``The result of a `spin measurement', for example, depends in a very complicated way on the initial position {\it x} of the particle and on the strength and geometry of the magnetic field. Thus the result of the measurement does not actually tell us about some property previously possessed by the system, but about something which has come into being in the combination of system and apparatus.'' This is perfectly consistent with Bohr's remark that the most important lesson of QM was an epistemic one, namely, that we are not only spectators but also actors in the great drama of (quantum) existence. Bohr developed his {\it epistemic} notion of contextuality based on the incompatibility of (complementary) measurement situations and the subsequent representation of phenomena (in terms of either waves {\it or} particles), stressing that \cite{Bohr29}: ``We must, in general, be prepared to accept the fact that a complete elucidation of one and the same object may require diverse points of view which defy a unique description.'' 

We might comprise Bohr's understanding of contextuality in the following manner:\\

\noindent {\it {\bf Bohrian Contextuality:} Since contexts, understood as experimental situations described in terms of classical theories, are incompatible in QM, the same quantum object might require mutually complementary physical representations. The most paradigmatic example of this notion of contextuality is provided by the double-slit experiment which gives rise to the so called ``wave-particle duality''. According to Bohr, this experiment requires both a `wave representation' and a `particle representation'. One might then consider the quantum object as if it is either a `wave' or a `particle' depending on the measurement set up chosen by the experimenter in the lab.}\\

However, within the foundational literature there also exists a very different approach which has considered the question of contextuality taking a formal or ontic\footnote{We use the term `ontic' to make clear the distance with respect to the `epistemic' views. See for a detailed discussion \cite{deRonde16b, deRonde17a}. As it has been remarked by van Fraassen, while the question of the ontological (or metaphysical) interpretation of a theory is something of main importance for the realist, for the empiricist this question is one of relative concern. He says \cite[p. 242]{VF91} that an ontological interpretation responds to the questions of ``what would it be like for this theory to be true, and how could the world possibly be the way this theory says it is?''} ---rather than epistemic--- viewpoint with respect to the orthodox formalism of QM. This approach goes back to Erwin Schr\"odinger who explained most clearly, also in 1935, his worries about the limits of classical ontology when considering the quantum formalism:
\begin{quotation}
\noindent {\small ``[In QM,] if I wish to ascribe to the model at each moment a definite (merely not exactly known to me) state, or (which is the same) to {\small {\it all}} determining parts definite (merely not exactly known to me) numerical values, then there is no supposition
as to these numerical values {\small{\it to be imagined}} that would
not conflict with some portion of quantum theoretical assertions." \cite[p. 156]{Schr35}}
\end{quotation}

\noindent Simon Kochen and Ernst Specker continued this analysis of the quantum formalism in their famous article of 1967 \cite{KS}. Taking as a standpoint the orthodox formalism of QM they asked ---implicitly--- a realist (or ontological) question regarding the possible {\it representation} of the theory which has no epistemic reference whatsoever ---i.e., no explicit reference to classical experimental situations, no reference to subjects or agents nor any reference to measurement outcomes. {\it Would it be possible to consider projection operators as actual (definite valued) preexistent properties within the orthodox formalism of QM?} This question, which attempts to understand the mathematical formalism in conceptual terms (i.e., in terms of systems with preexistent properties), led them to a very interesting analysis which we now shortly recall. 

In QM the frames under which a vector is represented mathematically are considered in terms of orthonormal bases. We say that a set $\{\alpha_1,\ldots,\alpha_n\}\subseteq {\cal H}$ in an $n$-dimensional Hilbert space is an \emph{orthonormal basis} if $\langle \alpha_{i} | \alpha_{j} \rangle = 0$ for all $1 \leq i < j \leq n$ and $\langle \alpha_i|\alpha_i\rangle=1$ for all $i=1,\ldots,n$. A physical quantity is represented by a self-adjoint operator on the Hilbert space ${\cal H}$. We say that $\mathcal{C}$ is a \emph{context} if $\mathcal{C}$ is a commutative subalgebra generated by a set of self-adjoint bounded operators $\{P_1,\ldots,P_s\}$ on ${\cal H}$. Quantum contextuality, which was most explicitly recognized through the KS theorem \cite{KS}, asserts that a value ascribed to a physical quantity $P$ cannot be part of a global assignment of binary values but must, instead, depend on some specific context from which $P$ is to be considered. Let us discuss this in some detail. 

Physically, a global binary valuation allows us to define the preexistence of actual definite valued properties of a system; i.e., the reference to the existence of properties independently of particular measurement observations.\\  

\noindent {\it
{\bf Binary Valuation:} A binary valuation is an function to $\{0,1\}$.
}\\

\noindent Mathematically, a \emph{valuation} over an algebra $\mathcal{A}$ of self-adjoint operators on a Hilbert space, is a real function satisfying,

\begin{enumerate}
\item[1.] \emph{Value-Rule (VR)}: For any $P\in\mathcal{A}$, the value $v(P)$ belongs to the spectrum of $P$, $v(P)\in\sigma(P)$.
\item[2.] \emph{Functional Composition Principle (FUNC)}: For any $P\in\mathcal{A}$ and any real-valued function $f$, i.e. $v(f(P))=f(v(P))$.
\end{enumerate}

\noindent We say that the valuation is a \emph{Global Binary Valuation (GBV)} if $\mathcal{A}$ is the set of all bounded, self-adjoint operators. In case $\mathcal{A}$ is a context, we say that the valuation is a \emph{Local Binary Valuation (LBV)}. We call the mathematical property which allows us to paste consistently together multiple contexts of {\it LBVs} into a single {\it GBV}, {\it Binary Value Invariance (BVI)}. First assume that a {\it GBV} $v$ exists and consider a family of contexts $\{ \mathcal{C}_i \}_I$. Define the {\it LBV} 
$v_i:=v|_{\mathcal{C}_i}$ over each $\mathcal{C}_i$. 
Then it is easy to verify that the set 
$\{v_i\}_I$ satisfies the \emph{Compatibility Condition (CC)}, 
$$v_i|_{ \mathcal{C}_{i} \cap \mathcal{C}_j} =v_j|_{\mathcal{C}_i\cap \mathcal{C}_j}
,\quad \forall i,j\in I.$$

\noindent The {\it CC} is a necessary condition that must satisfy a family of {\it LBVs} in order to determine a {\it GBV}. We say that the algebra of self-adjoint operators is \emph{BVI} if for every family of contexts $\{ \mathcal{C}_i\}_I$ and {\it LBVs} 
$v_i: \mathcal{C}_i \rightarrow \mathbb{R}$ satisfying the \emph{CC}, there exists a {\it GBV} $v$ such that $v|_{\mathcal{C}_i}=v_i$.

If we have {\it BVI}, and hence, a {\it GBV} exists, this would allow us to give values to all magnitudes at the same time maintaining a {\it CC} in the sense that whenever two magnitudes share one or more projectors, the values assigned to those projectors are the same in every context. 

\begin{definition}
Let $\mathcal{H}$ be a Hilbert space and let $\mathcal{G}$ be 
the set of observables.
An \emph{Actual State of Affair} (ASA)
is a global binary valuation
$\Psi:\mathcal{G}\to\{0,1\}$ such that $\Psi(I)=1$ and
\[
\Psi(\sum_{i=1}^{\infty} P_i)=
\sum_{i=1}^\infty \Psi(P_i)
\]
for any piecewise orthogonal projections $\{P_i\}_{i=1}^{\infty}$.
Notice that if $\dim(\mathcal{H})=n<\infty$, then the previous
condition says that $\Psi(\alpha_1)+\ldots+\Psi(\alpha_n)=1$ for any orthonormal
basis $\{\alpha_1,\ldots,\alpha_n\}$ of $\mathcal{H}$, 
where we denote $\Psi(\alpha_i)$ to indicate $\Psi(|\alpha_i\rangle\langle \alpha_i|)$.
\end{definition}

The KS theorem, in algebraic terms, rules out the existence of an ASA when the dimension of the Hilbert space is greater than $2$ and thus, an interpretation of projection operators as {\it preexistent} properties becomes problematic. 
The following theorem is a topological adaptation of the KS theorem ---as stated in \cite[Theorem 3.2]{DF}--- to the case of contexts:\footnote{See also for a detailed analysis of contextuality in the topos approaach \cite{RFD14, Eva16}.}

\begin{theo} 
{\bf (Kochen-Specker)} If ${\cal H}$ is a Hilbert space of 
$\dim(\mathcal{H}) >2$, then an ASA is not possible.
\end{theo}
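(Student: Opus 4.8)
The plan is to reduce to the classical three-dimensional Kochen--Specker obstruction and then to invoke it.

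First I would reduce to the case $\dim(\mathcal{H})=3$. Suppose $\Psi$ is an ASA on $\mathcal{H}$ with $\dim(\mathcal{H})>3$. Fixing an orthonormal basis $\{\beta_i\}$ of $\mathcal{H}$, the additivity condition gives $\sum_i \Psi(|\beta_i\rangle\langle\beta_i|)=\Psi(I)=1$, so there is a vector $v^*$ with $\Psi(|v^*\rangle\langle v^*|)=1$. Complete $v^*$ to an orthonormal triple $\{v^*,\gamma_1,\gamma_2\}$ and put $\mathcal{H}_0=\mathrm{span}\{v^*,\gamma_1,\gamma_2\}$. Since $P_{\mathcal{H}_0}$ and $P_{\mathcal{H}_0^{\perp}}$ are orthogonal projections summing to $I$, we have $\Psi(P_{\mathcal{H}_0})+\Psi(P_{\mathcal{H}_0^{\perp}})=1$ with both terms in $\{0,1\}$; and $\Psi(P_{\mathcal{H}_0})=\Psi(|v^*\rangle\langle v^*|)+\Psi(|\gamma_1\rangle\langle\gamma_1|)+\Psi(|\gamma_2\rangle\langle\gamma_2|)\geq 1$. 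Hence $\Psi(P_{\mathcal{H}_0})=1$ (and $\Psi(|\gamma_j\rangle\langle\gamma_j|)=0$), so the restriction of $\Psi$ to the rank-one projectors along vectors of $\mathcal{H}_0$ assigns exactly one value $1$ to each orthonormal triple spanning $\mathcal{H}_0$. Thus it suffices to derive a contradiction when $\dim(\mathcal{H})=3$.

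I would then rephrase the three-dimensional case as a statement about a frame function. Identifying $\mathcal{H}_0$ with $\mathbb{C}^{3}$ and restricting attention to the unit sphere $S^{2}$ of real vectors, define $f\colon S^{2}\to\{0,1\}$ by $f(v)=\Psi(|v\rangle\langle v|)$. Every real orthonormal basis of $\mathbb{R}^{3}$ is an orthonormal basis of the complex space, so the additivity condition becomes the frame equation $f(u)+f(v)+f(w)=1$ for every orthonormal triple $(u,v,w)$; moreover $f(-v)=f(v)$, so $f$ descends to $\mathbb{RP}^{2}$, and $f$ is not constant, since a constant frame function would make every triple sum to $0$ or to $3$. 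So the theorem reduces to: no two-valued frame function on $S^{2}$ exists. The route I would take follows Kochen and Specker: prove the \emph{spreading lemma} that there is $\varepsilon>0$ such that $f(a)=1$ and $\mathrm{angle}(a,b)<\varepsilon$ force $f(b)=1$. Granting it, $f^{-1}(1)$ is open, and symmetrically so is $f^{-1}(0)$ (if $f(a)=0$ but $f(b)=1$ for some $b$ within $\varepsilon$ of $a$, the lemma would give $f(a)=1$); since $S^{2}$ is connected, one of these sets is empty, contradicting non-constancy. The spreading lemma itself is extracted from a finite sub-configuration: for each relevant pair $(a,b)$ one exhibits an explicit family of interlocking orthonormal triples of unit vectors on which the frame equation together with $f(a)=1$ propagates the value $1$ to $b$. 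Equivalently, one may simply quote a known finite Kochen--Specker set in $\mathbb{R}^{3}$ (the original $117$-vector set, or one of the smaller later ones) and verify by case analysis that no labelling assigning exactly one $1$ to each orthogonal triple of the set exists.

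The main obstacle is precisely this last ingredient: the construction of the explicit finite geometric configuration and the combinatorial check that its orthogonality constraints cannot be two-coloured consistently. Everything preceding it --- the reduction to dimension three and the translation into a two-valued frame function --- is routine bookkeeping with the ASA axioms; the genuine content lies in the rigidity of the orthogonality graph, which is what forbids a global two-valued valuation.
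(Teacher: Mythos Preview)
Your outline is correct and follows the classical Kochen--Specker route: reduce to a three-dimensional subspace (your choice of $\mathcal{H}_0$ containing a vector $v^*$ with $\Psi$-value $1$ guarantees that the restriction sums to $1$ on each orthonormal triple of $\mathcal{H}_0$), pass to real unit vectors, and then appeal either to the spreading/connectedness argument or to an explicit finite non-colourable orthogonality configuration. The bookkeeping with the ASA axioms is sound, and you correctly isolate the genuine combinatorial/geometric obstruction as the only non-routine step.

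The paper, however, does not supply its own proof at the point where the theorem is stated; it simply records the result as a citation (a topological adaptation from the reference labelled [DF]). Later, in Section~4, the paper does exhibit a concrete witness --- the Cabello--Estebaranz--Garc\'{\i}a-Alcaine set of $18$ vectors in a $4$-dimensional space, with the one-line parity argument: $9$ orthonormal bases, each vector lying in exactly two of them, so summing the nine frame equations gives an even left side and $9$ on the right. That route bypasses your reduction to dimension $3$ entirely and replaces the spreading lemma by a pure parity count on a fixed finite graph. Your approach is the more general one, covering all dimensions $>2$ uniformly via the reduction step; the paper's illustration is shorter and more visual but, taken literally, only treats dimension $4$ (and does not by itself reach dimension $3$).
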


\begin{coro}[KS Contextuality] Given a vector in Hilbert, the multiple contexts ---considered in terms of bases or complete set of commuting observables--- define projection operators which cannot be interpreted as preexistent properties possessing definite 
and compatible binary values, 0 and 1.
\end{coro}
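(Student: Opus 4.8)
\medskip
\noindent\textbf{Proof proposal.} The plan is to read this corollary as the physical (contrapositive) face of the Kochen--Specker theorem stated above, so that the work reduces to making precise the dictionary ``preexistent, context-independent, compatible binary values on projections'' $\leftrightarrow$ ``Actual State of Affair,'' and then quoting that theorem. I would argue by contradiction. Suppose that, for $\dim(\mathcal{H})>2$, the projection operators occurring in the various contexts can be regarded as preexistent properties carrying definite and mutually compatible binary values. Since a projection is a projection irrespective of the context through which it is read, ``preexistent and context-independent'' means precisely that there is a single function $\Psi$ sending each projection $P$ on $\mathcal{H}$ to a value $\Psi(P)\in\{0,1\}$; ``compatible'' means that within each context $\mathcal{C}_i$ the restriction $v_i:=\Psi|_{\mathcal{C}_i}$ is a genuine \emph{LBV}, i.e. it obeys \emph{VR} and \emph{FUNC}. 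The family $\{v_i\}_I$ then trivially satisfies the \emph{CC}, each $v_i$ being a restriction of the one $\Psi$, so what is being assumed is exactly the existence of a global binary valuation that restricts compatibly to every context. (One could also phrase the gluing of local data via \emph{BVI}, but here the hypothesis already hands us the global function directly.)

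\medskip
\noindent Next I would check that $\Psi$ is in fact an \emph{ASA}. The clause $\Psi(I)=1$ is immediate from \emph{VR}, since $\sigma(I)=\{1\}$. For the additivity clause over a piecewise orthogonal family $\{P_i\}$, the key point is that $\{P_i\}$ together with $Q:=\sum_i P_i$ lies inside a single commutative context $\mathcal{C}$; there one selects a bounded self-adjoint $A\in\mathcal{C}$ (say $A=\sum_i i\,P_i$ for a finite family, or $A=\sum_i 2^{-i}P_i$ for a countable one) for which each $P_j$ and $Q$ are real (Borel) functions of $A$, and applies \emph{FUNC} to $\Psi|_{\mathcal{C}}$. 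Writing $\Psi(A)=\lambda\in\sigma(A)$: if $\lambda$ is the scalar attached to some $P_{j_0}$ then $\Psi(P_{j_0})=1$, $\Psi(P_j)=0$ for $j\neq j_0$, and $\Psi(Q)=1$; if instead $\lambda=0$, which forces $Q\neq I$, then $\Psi(P_j)=0$ for every $j$ and $\Psi(Q)=0$. In both cases $\Psi(Q)=\sum_i\Psi(P_i)$, so $\Psi$ satisfies every clause in the definition of an \emph{ASA}. (Conceptually this is just the statement that a valuation on a commutative algebra obeying \emph{VR} and \emph{FUNC} is a character, and characters are additive on orthogonal resolutions of projections.)

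\medskip
\noindent Finally I would invoke the Kochen--Specker theorem above: for $\dim(\mathcal{H})>2$ no \emph{ASA} exists. This contradicts the existence of $\Psi$, and hence no context-independent, compatible, $\{0,1\}$-valued assignment to the projection operators can exist. In particular the projectors singled out by the various bases ---equivalently, by the complete sets of commuting observables--- through which a given vector may be expressed cannot jointly be read as preexistent, definite-valued properties, which is the assertion of the corollary.

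\medskip
\noindent The step I expect to be the only genuine obstacle is the additivity verification in the countably infinite case: one has to be a little careful that $\{P_i\}$ and $Q$ really do lie in one commutative subalgebra and that \emph{FUNC} is available for the (generally non-continuous) Borel functions needed to recover the $P_j$ from $A$. In finite dimensions everything collapses to polynomial functional calculus over a single orthonormal basis, and then nothing remains beyond routine bookkeeping with the definitions.
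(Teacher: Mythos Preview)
Your proposal is correct, and indeed considerably more detailed than what the paper itself supplies: the paper offers no explicit proof of the corollary at all, treating it simply as the interpretive restatement of the Kochen--Specker theorem that immediately precedes it. In the paper's spirit, the ``proof'' is just the observation that ``preexistent properties with definite and compatible binary values'' is exactly what an \emph{ASA} formalizes, and KS says no \emph{ASA} exists for $\dim(\mathcal{H})>2$.

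What you do differently is to make that dictionary fully explicit: you unpack ``preexistent and compatible'' into the existence of a single $\{0,1\}$-valued function on projections whose restrictions to contexts are \emph{LBV}s obeying \emph{VR} and \emph{FUNC}, and you then verify that such a function satisfies the defining clauses of an \emph{ASA} (normalization and $\sigma$-additivity on orthogonal families) before invoking the theorem. This buys you an honest reduction rather than a verbal identification, at the cost of the functional-calculus bookkeeping you carry out. Your self-flagged worry about the countably infinite additivity clause is the right place to be cautious --- with $A=\sum_i 2^{-i}P_i$ the spectrum acquires $0$ as a limit point, and \emph{VR} alone does not exclude $\Psi(A)=0$ even when $\sum_i P_i=I$ --- but the paper operates at a level of generality where this issue is not engaged (and in finite dimensions, which is where the graph-theoretic KS proofs the paper emphasizes actually live, your argument is unproblematic).
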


\noindent Let us remark that the definition of contexts within KS type contextuality is based ---unlike Bohr's contextuality, which defines contexts in terms of classical descriptions--- on the mathematical formalism of QM alone. The result is completely formal. In this respect, it is only at a later stage of analysis, when including the (ontological) interpretation of projection operators as properties that KS theorem can be read as an {\it ad absurdum} proof of the impossibility to interpret a vector in Hilbert space in terms of a preexistent {\it Actual State of Affairs} (ASA). Let us explain this in more detail. By an ASA we mean a closed system considered in terms of a set of actual (definite valued) properties which can be thought as a map from the set of properties to the $\{0,1\}$. Specifically,  an ASA is a function 
$\Psi: \mathcal{G}\rightarrow\{0,1\}$ from the set of properties to $\{0,1\}$ satisfying certain compatibility conditions (see above).  
We say that the property $P\in \mathcal{G}$ is \emph{true} if $\Psi(P)=1$ and 
$P\in \mathcal{G}$ is \emph{false} if $\Psi(P)=0$. The evolution of an ASA is formalized by the fact that the morphism $f$  satisfies $\Phi f=\Psi$. Diagrammatically, 
\[
\xymatrix{
\mathcal{G}_{t_1}\ar[dr]_{\Psi}\ar[rr]^f&&\mathcal{G}_{t_2}\ar[dl]^{\Phi}\\
&\{0,1\}
}
\]
Then, given that $\Phi(f(P))=\Psi(P)$, the truth of $P\in \mathcal{G}_{t_1}$ is equivalent to the truth of $f(P)\in \mathcal{G}_{t_2}$. This formalization comprises the idea that the properties of a system remain existent through the evolution of the system. The model allows then to claim that the truth or falsity of a property is independent of particular observations. Or in other words, binary-valuations are a formal way to capture the classical actualist (metaphysical) representation of physics according to which the properties of objects preexist to their measurement. From a realist perspective ---recalling an example given by Einstein to Pauli---, the moon has a position regardless of whether we choose to observe it or not. This is in fact the main presupposition of the realist stance, the idea that reality has an existence independent of particular observations. However, when restricted to classical physics the claim becomes even more specific. It relates to a particular representation of physical reality in terms of an ASA. Something which is true in the particular cases of classical physical formalisms, all of which possess a commutative mathematical structure. 

\begin{theo} 
{\bf (Binary Non-Contextuality)} Let $\Gamma$ be a classical phase space of any dimension. 
Then, there exists an Actual State of Affairs.
\end{theo}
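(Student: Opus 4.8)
The plan is to exhibit an ASA explicitly, exploiting the single structural difference with the quantum case: a classical phase space carries a \emph{commutative} algebra of observables, so that all ``contexts'' already sit inside one maximal commutative algebra and the pasting obstruction responsible for the Kochen--Specker theorem simply never arises. First I would fix the classical counterparts of the notions entering the definition of an ASA. Let $\mathcal{G}$ be the algebra of bounded measurable real functions on $\Gamma$; the role of the projections is played by the idempotents, i.e.\ the characteristic functions $\chi_S$ of (Borel) subsets $S\subseteq\Gamma$, so that $I=\chi_\Gamma$, orthogonality of $\chi_S$ and $\chi_T$ means $S\cap T=\emptyset$, and for $f\in\mathcal{G}$ one has $\sigma(f)=\overline{f(\Gamma)}$. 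In particular $\sigma(\chi_S)\subseteq\{0,1\}$, so VR automatically forces every valuation to be binary on properties.

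Next I would pick an arbitrary point $\omega_0\in\Gamma$ --- a ``phase point'', which is precisely the classical notion of a sharp state --- and define $\Psi:=\mathrm{ev}_{\omega_0}$, that is $\Psi(f):=f(\omega_0)$, so that on a property $\Psi(\chi_S)=1$ iff $\omega_0\in S$. Checking the ASA axioms is then routine. VR holds because $f(\omega_0)\in f(\Gamma)\subseteq\sigma(f)$. FUNC holds because in the commutative setting, for an observable $g\in\mathcal{G}$ and a real function $h$, the observable $h(g)$ is literally the pointwise composite $h\circ g$, whence $\Psi(h(g))=h(g(\omega_0))=h(\Psi(g))$. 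Clearly $\Psi(I)=\chi_\Gamma(\omega_0)=1$. Finally, for piecewise orthogonal projections, i.e.\ pairwise disjoint $\{S_i\}_{i=1}^\infty$, the identity $\sum_i\chi_{S_i}=\chi_{\bigsqcup_i S_i}$ holds pointwise, so evaluating at $\omega_0$ gives
\[
\Psi\Bigl(\sum_{i=1}^{\infty}\chi_{S_i}\Bigr)=\chi_{\bigsqcup_i S_i}(\omega_0)=\sum_{i=1}^{\infty}\chi_{S_i}(\omega_0)=\sum_{i=1}^{\infty}\Psi(\chi_{S_i}).
\]
When $\dim\Gamma<\infty$ this specializes to the partition statement: for any decomposition $\Gamma=S_1\sqcup\cdots\sqcup S_n$ exactly one block contains $\omega_0$, hence $\Psi(\chi_{S_1})+\cdots+\Psi(\chi_{S_n})=1$.

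I do not expect a genuine obstacle: the content of the statement is conceptual rather than technical. The one point that deserves care is the choice of the ambient algebra, since the continuous functions $C(\Gamma)$ do not contain all characteristic functions; one should therefore work with bounded measurable functions (or with $L^\infty$ relative to a reference measure), for which point evaluation is still a well-defined multiplicative functional. Once this is settled, the proof amounts to the observation that every character of the commutative observable algebra --- equivalently, by Gelfand duality, every point of $\Gamma$ --- is an ASA. So, far from there being none, the classical phase space supports an ASA for each of its phase points, which is exactly the contrast with the Kochen--Specker situation that the theorem is meant to record.
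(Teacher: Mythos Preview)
Your proof is correct and in fact supplies the explicit construction that the paper only gestures at: the paper's one-line proof simply observes that classical observables commute and asserts that ``an ASA is the same as a GBV, that is, a function to $\{0,1\}$'', without exhibiting one. Your evaluation map $\Psi=\mathrm{ev}_{\omega_0}$ is exactly the standard way to make that assertion precise, and your verification of VR, FUNC, $\Psi(I)=1$, and countable additivity is the content one would expect behind the paper's sketch; the discussion of why one should work in bounded measurable functions rather than $C(\Gamma)$ is a nice point of care that the paper does not address at all.
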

\begin{proof}
Classical observables commute. Hence, an ASA is the same as a GBV, that is, 
a function to $\{0,1\}$.
\qed
\end{proof}\\

To end this section, let us remark some points which will be important to keep in mind for the analysis of the following sections. While Bohrian contextuality is strictly related to our ``classical image of the world'', KS contextuality is a purely formal statement regarding valuations about the orthodox formalism of QM and the limits of its possible ontological interpretation in terms of definite valued properties. There is no direct implication between these two very different notions of contextuality \cite{deRonde16a}. While the first implies a reductionistic understanding of QM with respect to the classical representation of physics, the latter presents a formal constraint for the projection operators of a quantum state with respect to global binary valuations. Unlike Bohr's contextuality, KS theorem remains silent about the possible development of non-classical conceptual representations that would allow us to understand QM in a radically non-classical manner. As we shall discuss in this paper, while the topos approach seems to be part of the orthodox line of research which attempts to ``bridge the gap'' between QM and our classical worldview, the logos approach takes as a standpoint the formalism of QM and stresses the need to develop an objective (non-reductionistic) conceptual representation of physical reality ---one which, in principle, might even be non-classical.

\section{The Topos Approach to QM}

The topos approach was originally proposed by Chirs Isham \cite{Isham97}, Jeremy Butterfield and Andreas D\"oring \cite{DoringIsham08, DoringIsham11, DoringIsham11b}. At its origin, in a series of four papers \cite{IshamButter1, IshamButter2, IshamButter3, IshamButter4}, Isham and Butterfield investigated in depth the KS theorem attempting to provide a ``neo-realist'' solution that would restore a classical understanding about what QM is really talking about. The main ideas put forward by this neo-realist topos approach were summarized  by D\"oring and Isham in the following passage:
\begin{quotation}
\noindent {\small ``When dealing with a closed system, what is needed is a realist interpretation
of the theory, not one that is instrumentalist. The exact meaning of ``realist'' is
infinitely debatable, but, when used by physicists, it typically means the following:

\begin{itemize}
\item[(1)] The idea of  ``a property of the system" (i.e. ``the value of a physical
quantity'') is meaningful, and representable in the theory.
\item[(2)] Propositions about the system are handled using Boolean logic. This
requirement is compelling in so far as we humans think in a Boolean way.
\item[(3)] There is a space of ``microstates'' such that specifying a microstate leads
to unequivocal truth values for all propositions about the system. The existence
of such a state space is a natural way of ensuring that the  first two requirements
are satisfied.
\end{itemize}

The standard interpretation of classical physics satisfies these requirements,
and provides the paradigmatic example of a realist philosophy in science. On
the other hand, the existence of such an interpretation in quantum theory is
foiled by the famous Kochen-Specker theorem." \cite{DoringIsham08}}
\end{quotation}

\noindent These presuppositions are, in great measure, analogous to those found within Bohr's approach to QM ---namely, the need of a classical representation to account for phenomena and the existence of a reductionistic limit between QM and classical physics. Indeed, the first point makes explicit the necessity of retaining the classical language and representation of physics, in terms of ``systems'' and ``properties'', also for QM.  The second presupposition goes in the same direction and points to the requirement that classical logic must be considered as a necessary precondition for thinking about any physical theory ---including, obviously, QM. (1) and (2) might thus seem to assume implicitly also the Bohrian claim according to which ``it would be a misconception to believe that the difficulties of the atomic theory may be evaded by eventually replacing the concepts of classical physics by new conceptual forms.'' (3) makes explicit the assumption of an atomist metaphysical picture. The reductionistic understanding of QM with respect to classical physics is also captured by the presupposition that (classical) macrosistems necessarily emerge from (quantum) microsystems (i.e., the quantum to classical limit).

As they explain in \cite[p. 1]{IshamButter1}, the topos approach is based on the introduction of ``a new type of valuation which is defined on all operators, and which respects an appropriate version of the functional composition principle. The truth-values assigned to propositions are (i) contextual; and (ii) multi-valued, where the space of contexts and the multi-valued logic for each context come naturally from the topos theory of presheaves.'' Before addressing this specific proposal for a ``proper valuation'' in QM, let us briefly introduce the mathematical scheme in which the topos approach is grounded.   

Let us start by defining the notion of a \emph{sieve}. Let $\mathcal{C}$ be a category and let $A\in\mathcal{C}$.
A \emph{sieve} on $A$ is a collections $S$ of maps over $A$ closed under precompositions. Specifically,  if the map $f:B\to A$ is in $S$ and  $g:C\to B$ is an arbitrary map, then $gf\in S$. Let us denote by $\mathcal{S}ie(A)$ to the set of sieves on $A$. It is a fact that $\mathcal{S}ie(A)$ is a Heyting algebra and that $\mathcal{S}ie:\mathcal{C}\to\mathbf{Hey}$ is a functor, where $\mathbf{Hey}$ is the category of Heyting algebras.

In order to define a sieve valued valuation as in  \cite{IshamButter1} we need to define the category $\mathcal{O}$
and the \emph{coarse-graining} presheaf $\mathcal{B}ool$.
Objects in $\mathcal{O}$ are bounded self-adjoint operators on a Hilbert space and we define an arrow $f:B\to A$  if and only if $B=f(A)$. The  \emph{coarse-graining} presheaf $\mathcal{B}ool$ is a contravariant functor from $\mathcal{O}$ to
$\mathbf{Bool}$ sending $A\in\mathcal{O}$ to the Boolean algebra generated by $\{f(A)\,\colon\,f:\sigma(A)\to\mathbb{R}\}$. 

A \emph{genaralized valuation} is a natural map
$\nu:\mathcal{B}ool\to \mathcal{S}ie$ satisfying
functional composition, null proposition condition, 
monotonicity, exclusivity and/or unit proposition condition.
Specifically, to each bounded self-adjoint operator $A$, a function
$\nu_A:\mathcal{B}ool(A)\to \mathcal{S}ie(A)$ such that
\begin{itemize}
\item Functional composition: 
$\nu_{h(A)}=h^*\nu_A$
for any Borel function $h:\sigma(A)\to\mathbb{R}$.
\item Null proposition condition: $\nu_{A}(0)=0$.
\item Monotonicity: If $P_1\le P_2$ then 
$\nu_A(P_1)\subseteq \nu_A(P_2)$.
\item Exclusivity: If $P_1\wedge P_2=0$ and $\nu_A(P_1)=1$, 
then $\nu_A(P_2)\subsetneq 1$.
\item Unit proposition condition: $\nu_{A}(1)=1$.
\end{itemize}
According to the authors, the ``size'' of the sieve $\nu_A(P)$ 
determines the degree of the partial truth of the proposition $P$. As they explain [{\it Op. cit.}, p. 19]: ``one wants the proposition A to be `more true', the greater the set of such points s."
Let us remark that, (i) the truth-value of a proposition belongs to a logical structure that is larger than $\{0, 1\}$; and (ii)  these target-logics are context-dependent.

In the second of this series of papers Isham and  Butterfield discuss the conceptual and intuitive level of analysis in order to justify the introduction of partial valuations. In order to do so they go back to classical physics arguing that: ``the notion of a classical macrostate motivates the classical analogue of a partial valuation, and thereby leads to the associated sieve-valued valuations." From this standpoint they explain the following:
\begin{quotation}
\noindent {\small ``So suppose we are given, not a microstate $s \in S$, but only a  macrostate, represented by some Borel subset $R \subseteq S$: what then can be said about  the `value' of a quantity $A$, or the truth-value of a proposition `$A \in \Delta$'. Various responses are possible: for example, the obvious choice is simply to say that the proposition $A$ is true in the macrostate $R$ if $A(R) \subseteq \Delta$, and false otherwise. Thus `$A \in \Delta$' is defined to be true if, for {\it all} microstates $s$ in $R$, the value $A(s)$ lies in the subset $\Delta$. 

However, one may feel that this assignment of true and false is rather
undiscriminating in so far as the proposition `$A \in \Delta$' is adjudged false irrespective of whether $A(s)$ fails to be in $\Delta$ for all $s \in R$, or does so only for a `few' points. For this reason, a more refined response is to say that one wants the proposition `$A \in \Delta$' to be `more true', the greater the set of such points s: an idea that can be implemented  by defining, for example, a generalised truth-value $v^R(A \in \Delta)$ of the proposition `$A \in \Delta$' to be the set of such points: $v^R(A \in \Delta)= R \cap \overline{A}^{-1}[\Delta]$"\cite[p. 20]{IshamButter2}}
\end{quotation}

As remarked above, the valuations investigated by the topos approach take for granted the orthodox reductionistic perspective according to which QM must be reduced to classical physics and classical (Boolean) logic in some ``limit''. And this seems to be the reason why the notions of ``coarse graining'' and ``partial truth'' become so important within this approach. We will come back to a more detailed analysis of these important points in section 5. In the next section we will present the logos categorical approach to QM which attempts to provide a different answer to the problem and understanding of quantum contextuality. Contrary to the topos approach, our line of research attempts to retain the main features of the orthodox quantum formalism and provide a non-reductionistic conceptual representation of them.

\section{The Logos Categorical Approach to QM}

In this section we present a mathematical formalism based on category theory that will allow us to introduce intensive valuations (section 6). Let us start with some constructions and definitions. Our main reference is \cite{maclane98}. First of all, a \emph{category} consists of a collection of \emph{objects} (often denoted as $X,Y,A,B$), a collection of \emph{morphisms} (or \emph{arrows}, denoted $f,g,p,q$) and four operations,
\begin{itemize}
\item To each arrow $f$, there exists an object $dom(f)$, called its \emph{domain}.
\item To each arrow $f$, there exists an object $codom(f)$, called its \emph{codomain}.
\item To each object $X$, there exists an arrow $1_X$, called the \emph{identity map} of $X$.
\item To each pair of arrows $f,g$ such that $dom(g)=codom(g)$ there exists a \emph{composition map}, $fg$ such that
$dom(fg)=dom(g)$ and $codom(fg)=codom(f)$.
\end{itemize}
An arrow $f$ is often denoted as $f:X\rightarrow Y$ to empathizes
the fact that $dom(f)=X$ and $codom(f)=Y$. 
We say that an arrow $f:X\rightarrow Y$ is invertible if there
exists an arrow $g:Y\rightarrow X$ such that $fg=1_Y$ and $gf=1_X$.
The collection of arrows between $X$ and $Y$ is denoted $\hom(X,Y)$. The collection of categories has a structure of a category. The arrows are called \emph{functors}. 
A functor $F:\mathcal{C}\rightarrow \mathcal{D}$ assign objects to
objects, arrows to arrows and is compatible with the
four operations (domain, codomain, identity and composition).

Let us present three standard constructions in category theory,
the \emph{comma category}, the \emph{graph of a functor}
and the \emph{category over an object}. The second construction
is a particular case of the first and the third of the second. Let $F:\mathcal{A}\rightarrow \mathcal{C}$ and  $G:\mathcal{B}\rightarrow \mathcal{C}$ be two functors with
the same codomain. 
The \emph{comma category} $F|G$
is a category whose objects are arrows in $\mathcal{C}$ of
the form 
\[
f:F(A)\rightarrow G(B),
\]
where $A\in\mathcal{A}$ and $B\in\mathcal{B}$. 
An arrow between $f$ and $g$ is a commutative square.

The \emph{graph of a functor} $F:\mathcal{A}\rightarrow \mathcal{C}$
is defined as the comma categoty $F|1$,
where 
$1=1_\mathcal{C}:\mathcal{C}\rightarrow\mathcal{C}$ 
is the identity functor. 

In the special case where
the functor $F$ is equal to $\hom(-,C)$ for some object 
$C\in\mathcal{C}$, the graph of this functor is called 
the category over $C$ and is denoted $\mathcal{C}|_C$.
This is our main structure. Objects in $\mathcal{C}|_C$
are given by arrows to $C$, $p:X\rightarrow C$, 
$q:Y\rightarrow C$, etc. Arrows $f:p\rightarrow q$
are commutative triangles,
\[
\xymatrix{
X\ar[rr]^f\ar[dr]_p& &Y\ar[dl]^q\\
&C
}
\]

\begin{example}
Let $\mathcal{S}ets|_\mathbf{2}$ be the category of sets
over $\mathbf{2}$, where $\textbf{2}=\{0,1\}$ and $\mathcal{S}ets$ is
the category of sets.
Objects in $\mathcal{S}ets|_\mathbf{2}$
are functions from a set to $\{0,1\}$ (that is, GBV)
and morphisms are commuting triangles, 
\[
\xymatrix{
\mathcal{G}_1\ar[rr]^f\ar[dr]_\Psi& &\mathcal{G}_2\ar[dl]^\Phi\\
&\{0,1\}
}
\]
In the previous triangle, $\Psi$ and $\Phi$ are objects of 
$\mathcal{S}ets|_\mathbf{2}$
and $f$ is a function satisfying $\Phi f=\Psi$.

As we mention before, this category is relevant in classical logic.
We can assign a true/false value to every element of $\mathcal{G}_1$
and/or $\mathcal{G}_2$ in a consistent manner. We say that $P\in \mathcal{G}_1$ 
(assume $P$ is a \emph{proposition} in the \emph{universe} $\mathcal{G}_1$) is true
if $\Psi(P)=1$, else we say that $P$ is false. 
Even more so, assume that we have a map $f:\mathcal{G}_1\rightarrow \mathcal{G}_2$
such that $\Phi f=\Psi$, then the truth or falsity of $P$ is unchanged
via $f$, that is $f(P)$ is true if and only if $P$ is true,
\[
f(P)\mbox{ is true }
\Longleftrightarrow 1=\Phi (f(P))=\Psi(P)
\Longleftrightarrow 
P\mbox{ is true }
\]
\end{example}

We are going to generalize the category $\mathcal{S}ets|_\mathbf{2}$
to allow a mathematical formalism to work with distinct logics.
One of the first ideas in this direction came from Weyl's work in 1949 on \emph{aggregates}, i.e. a
set with an equivalence relation,  \cite[App.B]{weyl49}. 
We generalize the definition of aggregates
by considering sets with a reflexive, symmetric relation.
Such an object is called
a \emph{simple graph} or just a \emph{graph}.
We denote by $\mathcal{G}ph$ to the 
the category of graphs. 
It extends naturally, the category of sets and the category
of aggregates.
This category has very nice categorical properties, 
\cite{quasitopoi, graphtheory}.

\begin{example}
A set is a graph without edges. An aggregate is a graph, 
in which, the relation is transitive. More generally, 
we can assign to a category a graph, where the objects
are the nodes of the graph and there is an edge
between $A$ and $B$ if $\hom(A,B)\neq\emptyset$.
Given that in a category we have a composition law, 
the resulting graph is an aggregate.

A more interesting case is that we can view 
an orthomodular lattice $\mathcal{L}$ as a graph.
We say that $x\in\mathcal{L}$ commutes with $y\in\mathcal{L}$, denoted $xC y$,
if $x=(x\wedge y)\vee (x\wedge y^\bot)$.
Clearly, $C$ is symmetric.
It is known that $C$ is reflexive if and only if $\mathcal{L}$
is orthomodular. Also $C$ is an equivalence relation 
if and only if $\mathcal{L}$ is Boolean.
Then, any orthomodular lattice with the commutativity
relation is a graph.
\end{example}

\begin{definition}
We say that a graph $\mathcal{G}$ is \emph{complete}
if there is an edge between two arbitrary nodes.
\end{definition}

An object in $\mathcal{G}ph|_{[0,1]}$ consists in 
a map $\Psi:\mathcal{G}\rightarrow [0,1]$, where $\mathcal{G}$ is a graph. 
We say that $\Psi$ is a \emph{Global Intensive Valuation}. Intuitively, $\Psi$
assigns a \emph{potentia} to each node of the graph $\mathcal{G}$.
Specifically, to each node 
$P\in\mathcal{G}$, we assign a number $\Psi(P)$, but this time, 
$\Psi(P)$ is a number between $0$ and $1$.

\begin{example}
Let $\mathcal{H}$ be Hilbert space and let $\Psi$ be a vector, $\|\Psi\|=1$.
Take $\mathcal{G}$ as the set of observables with the commuting relation
given by QM. This relation is \textbf{not} transitive, hence
$\mathcal{G}$ is a non-complete graph. 
To each observable $P\in\mathcal{G}$
apply the Born rule to get the number $\Psi(P)\in[0,1]$. 
Then, $\Psi:\mathcal{G}\rightarrow [0,1]$
defines an object in $\mathcal{G}ph|_{[0,1]}$.
We call this map a \emph{Potential State of Affair}.
\end{example}

The following definitions were taken from \cite{freyd}.
Let $\mathbf{Pos}$, $\mathbf{Lat}$ and $\mathbf{Hey}$
be the catergories of posets, lattices and Heyting alegebras.
Recall that a Heyting algebra is a lattice with an implication operation.
Let $\mathcal{C}$ be an arbitrary category with pullbacks.
A \emph{subobject} of an object $X$ in $\mathcal{C}$ 
is an isomorphism class of a monomorphism $i:S\to X$.
Two morphisms $i:S\to X$ and $j:T \to X$ are isomorphic if there exists an isomorphism $k:S\to T$ such that $i=jk$. The set $\mbox{Sub}(X)$ of subobjects of $X$ has an order. 
We say that $i\leq j$
if there exists $k$ such that $i=jk$,
\[
\xymatrix{
S\ar@{-->}[d]_{\exists k}\ar[r]^i\ar@{}[dr]|<<<<\equiv&X\\
T\ar[ur]_j&
}
\]
The order given in $\mbox{Sub}(X)$ is functorial.
If $f:X\to Y$, then
$f^*:\mbox{Sub}(Y)\to\mbox{Sub}(X)$ is 
an order-preserving function, \cite[1.451]{freyd}.
We can rephrase this by saying that 
$\mbox{Sub}:\mathcal{C}\to\mathbf{Pos}$ is a functor.
A category $\mathcal{C}$ is called a \emph{pre-logos}
if $\mbox{Sub}:\mathcal{C}\to\mathbf{Lat}$,
and is called a \emph{logos}
if $\mbox{Sub}:\mathcal{C}\to\mathbf{Hey}$.

Now that we know what a logos is, let us conclude by saying that $\mathcal{G}ph$ is a logos. It follows
from the fact that subgraphs of a graph form a Heyting algebra, \cite[3.3]{Angot15}. Let us also remark that it is possible to give to a logos a logic.  The logic defined by $\mathcal{G}ph$  is intuitionist (no binary valuations) and  paraconsistent (inconsistency-tolerant). In the next section we will turn our attention to the way KS type contextuality can be visualized in terms of graphs.

\section{KS Contextuality and  Graphs}  

KS theorem has been criticized in the literature for being ``too complicated'' and ``too abstract''.\footnote{For example Mermin states in \cite[p. 3375]{Mermin90a} that: ``The strengthening of Bell's example by GHZ and the demonstration that GHZ also works as a KS theorem should liberate future generations of students of the foundations of quantum mechanics from having to cope with the geometrical intricacies of the original KS argument.''} Indeed, the original mathematical derivation in \cite{KS} with 117 rays was quite difficult to follow for the average physicist. However, a few decades after, the work by Asher Peres and David Mermin ---between many others--- reduced the number of rays that had to be considered in the proof and developed more visualizable representations which made possible to better grasp the meaning of the theorem. In this respect, even though category theory is a rigorous and theoretically rich platform, it can also provide a visualizable approach to the seemingly complicated problem implied by KS contextuality through the use of graphs. In \cite{cabello} such a proof was developed. The authors constructed 18 vectors $\{v_1,\ldots,v_{18}\}$ in a 4-dimensional vector space such that there are 9 orthogonal bases and every vector is in exactly two of these bases. KS Theorem says that it is not possible to assign a $\{0,1\}$ value to every vector in such a way that $\sum_{j=1}^4 \nu(v_{i_j})=1$,  for every orthogonal basis $\{v_{i_1},v_{i_2},v_{i_3},v_{i_4}\}$. The proof is very simple and visualizable. By adding the 9 equations,  $\sum_{j=1}^4 \nu(v_{i_j})=1$, we get an even number in the right hand side (due to the repetitions of the vectors in each basis) and an odd number (nine) on the left. A contradiction. Hence, no such binary valuation exists.

We can represent the vectors in the following graph:
\begin{center}
\includegraphics[width=16em]{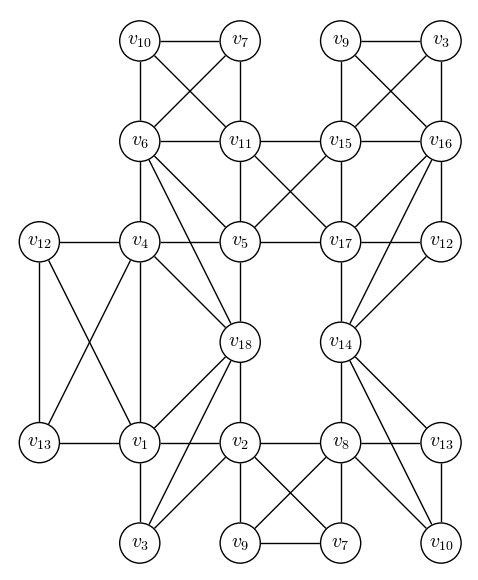}
\end{center}
{\footnotesize
\[
v_{1}=(0,0,1,0),\,
v_{2}=(1,1,0,0),\,
v_{3}=(1,-1,0,0),\,
v_{4}=(0,1,0,0),\,
v_{5}=(1,0,1,0),\,
v_{6}=(1,0,-1,0),\,
\]
\[
v_{7}=(1,-1,1,-1),\,
v_{8}=(1,-1,-1,1),\,
v_{9}=(0,0,1,1),\,
v_{10}=(1,1,1,1),\,
v_{11}=(0,1,0,-1),\,
v_{12}=(1,0,0,1),\,
\]
\[
v_{13}=(1,0,0,-1),\,
v_{14}=(0,1,-1,0),\,
v_{15}=(1,1,-1,1),\,
v_{16}=(1,1,1,-1),\,
v_{17}=(-1,1,1,1),\,
v_{18}=(0,0,0,1).
\]
}

\noindent Each node is a vector $v_i$ and each edge represents
the orthogonality relation; 
if there exists an edge between $v_i$
and $v_j$ then $v_i\bot v_j$, or equivalently, 
$|v_i\rangle\langle v_i|$ commutes with 
$|v_j\rangle\langle v_j|$. 
We only draw the edges relevant to
the 9 orthogonal 
basis (the contexts) leaving aside the relations,
\[
v_4\bot v_9,\,v_6\bot v_{16},\,v_1\bot,v_{11},\,
v_2\bot v_{17},\,v_8\bot v_5,\,v_{14}\bot v_{18},\,v_{3}\bot v_{10},\,v_{7}\bot v_{12},\,v_{13}\bot v_{15}.
\]
Now, in order to visualize the 9 different contexts involved within our graph we just need to see the nodes and circle them. 
\begin{center}
\includegraphics[width=16em]{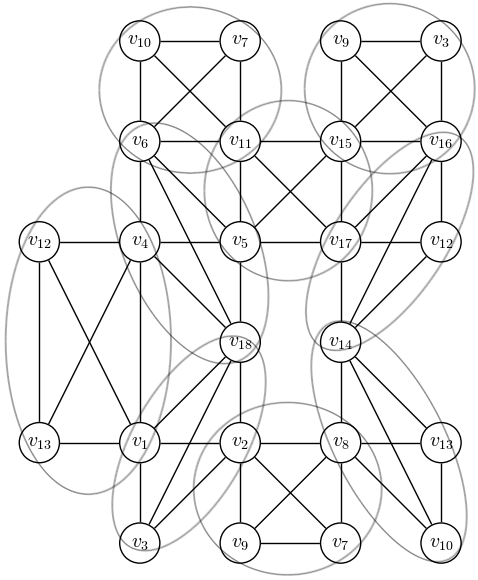}
\end{center}
A closer inspection on the nodes, shows that
there are repeated nodes, 
\[v_{12}, v_{13}, v_{3}, v_{9}, v_{7}, v_{10}.\]
This is done to avoid double crossing of edges. It is a nice exercise to try to test KS Theorem
by filling the circles in red or blue ($0$ or $1$) in such a way that the nodes in each context have 
the same color. This shows the power of our representation.

As we discussed above the impossibility to provide consistently values, $0$ or $1$, to the elements of the graph presents a difficulty for the objective account of the model interpreted in terms of properties with definite preexistent values. In the next two sections we will discuss, firstly, the coarse-graining contextual proposal investigated by the topos approach in terms of {\it sieve-valued valuations}, and secondly, our own logos proposal which taking as a standpoint the physical invariance of the Born rule introduces the notion of {\it intensive valuation} in order to restore the possibility of an objective (non-contextual) account of projection operators.

\section{The Topos Approach: Contextual Sieve-Valued Valuations}

KS theorem proofs there does not exist a GBV for the projection operators of a quantum state. This is the reason why Isham and Butterfield developed a generalization of the notion of valuation in terms of what they called {\it sieve-valued valuation}. As they explain:
\begin{quotation}
\noindent {\small ``In the [topos] programe to be discussed here, 
ual valuation will be developed in a different direction from that of the existing modal interpretations. In particular, rather than accepting only a limited domain of beables we shall propose a theory of `generalized' valuations that are defined globally on all propositions about values of physical quantities. However, the price of global existence is that any given proposition may have only a `partial' truth-value. More precisely, (i) the truth-value of a proposition `$A \in \Delta$' belongs to a logical structure that is larger than $ \{0, 1 \}$; and (ii) these target-logics are context-dependent." \cite[p. 5]{IshamButter1}}
\end{quotation}

\noindent It is important to stress that their approach also attempts to restore distributive logic for each chosen context. This goes in line with the widespread argument that once the context has been chosen in QM classicality is automatically restored: 
\begin{quotation}
\noindent {\small ``It is clear that the main task is to formulate mathematically the idea of a contextual, `partial' truth-value in such a way that the assignment of generalized truth-values is consistent with an appropriate analogue of the functional composition principle FUNC. The scheme also has to have some meaningful physical interpretation; in particular, we want the set of all possible partial truth-values for any given context to form some sort of distributive logic, in order to facilitate a proper semantics for this `neo-realist' view of quantum theory." \cite[p. 5]{IshamButter1}}
\end{quotation}

As remarked above, there are two important aspects of the topos approach which go in line with the Bohrian program that attempted to understand QM as a rational generalization of classical physics \cite{BokulichBokulich}. On the one hand, they both share an emphasis on the classical representation. For example, D\"oring and Isham \cite{DoringIsham11} argue that: ``we and our collaborators have shown how quantum theory can be re-expressed as a type of `classical physics' in the topos of  presheaves (i.e., set-valued contravariant functors) on the partially-ordered set all commutative von Neumann sub-algebras of the algebra of all bounded operators on the quantum-theory Hilbert space ${\cal H}$.'' It becomes clear that their goal is to restore a classical account even at the price of leaving aside some of the main features present within the quantum formalism. D\"oring and Barbosa make this point explicit: 
\begin{quotation}
\noindent {\small ``The so-called topos approach provides a radical reformulation of quantum theory. Structurally, quantum theory in the topos formulation is very similar to classical physics. There is a state object $\Sigma$, analogous to the state space of a classical system, and a quantity-value object $\mathbb{R}^{\leftrightarrow}$, generalising the real numbers. Physical quantities are maps from the state object to the quantity-value object $\mathbb{R}^{\leftrightarrow}$ hence the `values' of physical quantities are not just real numbers in this formalism. Rather, they are families of real intervals, interpreted as `unsharp values'.'' \cite[p. 1]{DoringBarbosa}}
\end{quotation}

But, as noticed above, the analogy to Bohr's approach does not only regard the exclusive use of classical notions in order to restore a classical understanding of what QM is talking about. It also relates to the solution provided to KS-contextuality through the {\it principle of complementarity}. As Benjamin Eva \cite{Eva} explains:
\begin{quotation}
\noindent {\small ``We have seen that in TQT [Topos Quantum Theory], the quantum state space is formalised as a kind of amalgamation of the local, classical state spaces of each of the `classical perspectives' on V (also referred to as `contexts'). `The topos approach emphasises the role of classical perspectives onto a quantum system. [...] One of the main ideas is that all classical perspectives should be taken into account simultaneously. (D\"oring, 2011) This emphasis on classical perspectives is deeply reminiscent of Bohr's famous `principle of complementarity'(PC), which is neatly summarised by the claim that `Talk of the position of an electron has sense only in the context of an experimental
arrangement for making a position measurement.' (Bohr, quoted in Gibbins [1987]) The philosophical upshot of PC is that physical propositions about a quantum system can only be made with reference to some fixed classical perspective on that system. This notion is taken seriously in TQT, and is evident in the way that physical propositions are eventually formalised.'' \cite{Eva}}
\end{quotation}

The topos approach might be regarded as one of the first to inaugurate the research of QM using category theory (see also \cite{Abramsky, CoeckeKissinger17}). This categorical research related to QM has become a field growing rapidly within the literature. However, the proposal of Isham and Butterfield has several drawbacks and difficulties which we will now discuss and analyze in some detail (see for a more detailed analysis \cite{Eva}). 

Firstly, the topos proposal does not evade KS contextuality completely. The term ``global'' is used in the topos approach in order to refer to the ``global sections of sheaves'', not to {\it global valuations} ---which is what KS theorem discusses about. In this respect, Isham and Butterfield remark themselves that ``the truth-values assigned to propositions [in the topos approach] are contextual''.  Explained in formal terms, they define a generalized valuation as a global section $\nu\in \mathbf{\Omega}^{\mathbf{G}}$. Then for every bounded self-adjoint operator $A$ on some Hilbert space, there is a map $\nu_A:\mathbf{G}(A)\to\mathbf{\Omega}(A)$ sending elements $P$ in the spectral algebra  of $A$ to a sieve on $A$, $\nu_A(P)$. Clearly,  a \emph{global} true/false assignment to elements in  $\mathbf{G}(A)$ for all $A$ is precluded by KS Theorem. Recall that $\mathbf{\Omega}(A)$ is a Heyting algebra and contains $\{0,1\}$.\footnote{The idea of a contextual or partial truth valuation has been recently formalized by Karakostas and Zafiris \cite{Karakostas14, KarakostasZafiris16}. Our approach goes in line with the development of a {\it potential truth} as discussed in \cite{deRonde17b}. The analysis of this specific subject exceeds the scope of this paper and will be left for a future work.} Unfortunately, this extension of the notion of valuation in terms  of sieve valued valuation does not seem to provide a clear physical explanation nor intuition of what is really going on. 

Secondly, we should remark that {\it binary valuations} continue to play a central role within the topos approach. The notions of ``partial truth'' and ``coarse graining''  introduce within the approach, in close analogy to classical physics, an epistemic realm of analysis. However, this analysis rests, like in the case of classical physics, in the implicit presupposition that properties do posses in fact definite truth values. Take for example a table as described by classical mechanics in terms of a rigid body. Obviously, this physical representation presupposes that the table (i.e., the rigid body) has a definite length $L$. The fact that we will never be able to measure it exactly is due to the limit of accuracy,  $\Delta l$, given by the precision of any apparatus used to measure the system. This is of course an epistemic problem, not an ontological one. Now, suppose that `$A=a_1$' and `$A=a_2$' are two propositions, $a_1\neq a_2$. The idea behind their generalized valuations is to find a \emph{partial truth} to a proposition like `$A=a$' by finding a \emph{coarse-grained} operator $f(A)$ such that the weaker proposition `$f(A)=f(a)$' is totally true, see \cite[p.9]{IshamButter1}. As they explain:
\begin{quotation}
\noindent {\small ``Physically, the inequality in Eq. (1.8) reflects the fact that the proposition `$f(A) \in f(\Delta)$' is generally weaker than the proposition `$A\in\Delta$' in the sense that the latter implies the former, but not necessarily vice versa. For example, the proposition `$f(A) = f(a)$' is weaker than the original proposition `$A = a$' if the function $f$ is many-to-one and such that more than one eigenvalue of  $\hat{A}$ is mapped to the same eigenvalue of $f(\hat{A})$. In general, we shall say that `$f(A)\in f(\Delta)$' is a \emph{coarse-graining} of `$A\in\Delta$'.'' \cite[p. 6]{IshamButter1}
}\end{quotation}

\noindent Still, the point which remains unclear from a physical perspective is the meaning of a quantum property, $A$, which does not possess a definite value but rather pertains to an interval, $A\in\Delta$. Clearly, even though a classical object such as a table possesses the property of having a definite length $L$, the knowledge we can acquire through measurements is related to a value $L'\in L \pm \Delta l$. Since KS theorem imposes a limit to the value of $A$ in ontological terms, it is still not clear what is the meaning of  a value which is not definite, $A\in\Delta$?

Thirdly, it is important to notice that the topos approach does not use the same topos which expresses KS theorem. Instead of remaining within this topos, Isham and Butterfied choose to discuss a different topos which is not directly related to KS contextuality.
\begin{quotation}
\noindent {\small ``5. Generalized Valuations as Global Sections of a Presheaf: We note in passing that there is a bijection between morphisms from {\bf G} to ${\bf \Omega}$ , and global elements of the `exponential' object  ${\bf \Omega}^{\bf G}$ which, roughly speaking, is the topos analogue of the set $Y^X$ of all maps from $X$ to $Y$ in normal set theory. Thus a generalized valuation does turn out to be a global section of a certain presheaf on $\mathcal{O}$, but it is the presheaf  ${\bf \Omega}^{\bf G}$, not the simple dual presheaf 
${\bf D}\circ{\bf W}$ to which the Kochen-Specker `no-go' theorem applies.'' \cite[p. 32]{IshamButter1}}
\end{quotation}

\noindent By definition, the \emph{dual presheaf} on $\mathcal{O}$ is a contravariant functor ${\bf D}\circ{\bf W}:\mathcal{O}\to\mathcal{S}et$ such that 
\[
{\bf D}\circ{\bf W}(-)=
\mbox{Hom}(\mathbf{G}(-),\{0,1\}).
\]
Saying it differently, an element in ${\bf D}\circ{\bf W}(A)$ 
assigns a true/false value to every element 
in the spectral algebra of $A$. The difference between 
${\bf \Omega}^{\bf G}$ is that a valuation $\nu_A$
assigns values in a Heyting algebra (different from $\{0,1\}$) 
to  elements in the spectral algebra of $A$.

Last but not least, maybe the deepest drawback of the topos approach is that the highly abstract categorical tools and models used within the approach in order to account for QM do not provide an intuitive understanding of the mathematical formalism. On the contrary, the ``bridge'' designed in order to explain how QM can be understood in terms of classical physics through sieve valued valuations, coarse graining processes and partial truth is very complicated and difficult to follow even for experts within the field. According to our view, the ``conceptual explanation'' of the approach presented in the second paper of the first series, \cite{IshamButter2}, fails to provide a clear physical intuition of what is really going on according to QM. Rather than providing examples which could be followed by the working physicist, the debate becomes even more abstract and complicated than in the first paper. The simple graphs that we used in the previous section in order to visualize KS theorem cannot be worked out in the topos approach since their new formalism lyes outside of the orthodox KS framework. 

As we shall see in the next section, our logos approach to QM takes a completely different standpoint and direction than the one assumed by the topos approach. Rather than attempting to ``bridge the gap'' between the quantum formalism and classical physics, our approach takes as standpoints, firstly, the orthodox formalism of QM, and secondly, the original physical meaning of ``global'' as discussed in the KS theorem (i.e., the possibility of representing consistently through a mathematical formalism a given state of affairs). The direction of our line of research is determined by the physical principles of QM themselves. Thus, it is the principles themselves which must provide the constraints that will allow us to develop a new set of concepts which, in turn, are capable of explaining what the theory is really talking about.

\section{The Logos Approach: Global Intensive Valuations}

It is important to remark that we do not attempt to provide an epistemological analysis of our approach in this paper. Since we believe it is only the theory which can tell us what can be observed, in order to provide an epistemological analysis we must previously understand what type of experience is involved within the theory. In this respect, the logos approach does not attempt to understand QM in reductionistic terms with respect to classical physics and logic. Our goal is not to ``bridge the gap'' between the quantum formalism and our ``common sense'' classical understanding of the world. Our main goal is to develop an objective conceptual representation which allows us to understand what the theory is talking about beyond mathematical structures and measurement outcomes. Taking a pluralist perspective as a standpoint \cite{deRonde16b}, we believe that QM can be developed by considering new (non-classical) physical concepts that, in turn, will provide us with an intuitive (non-classical) representation and understanding of the theory and the experience it talks about. Our guide in the development of such new conceptual forms is the orthodox quantum formalism itself assisted by some general physical considerations. Let us begin by the latter.  

The notion of invariance has always played a major role within the development of theories. As remarked by Max Born \cite{Born53}: ``the idea of invariant is the clue to a rational concept of reality, not only in physics but in every aspect of the world.'' The notion of invariance allows us to determine what is to be considered {\it the same} within a mathematical formalism regardless of any particular reference frame. In physics, invariants are quantities which can be translated ---through mathematical transformations--- from one reference frame to another. Even though the values of physical magnitudes might vary from one frame to another ---due to the dynamics between reference frames---, in both classical physics and relativity theory there is a {\it consistent translation} between the values of magnitudes of different frames secured by the specific laws of transformation ---Galilean transformations in the case of classical physics and Lorentzian in the case of relativity. As a consequence, within these theories it is always possible to provide a GBV of the properties that pertain to systems. For example, in classical physics, the position of a rabbit running through the fields and observed by a distant passenger of a high speed train can be translated to the position of that same rabbit taken from the perspective of another passenger waiting in the platform of the station. The fact that the values of observables (position, momentum, etc.) can be consistently translated from one reference frame to the other allows us to assume that such physical observables also bear an objective real existence completely independent of the specific choice of the reference frame of the observers. The rabbit has a set of dynamical properties (position, a momentum, etc.) independently of his observers in the train and on the platform. This definition of invariant observables allows us to argue that the physical system (the rabbit) possesses preexistent properties independently of the observers, their choice of the reference frame or even the measurements they might perform. Consequently we are allowed to claim that such properties are {\it dynamical variations} that pertain to {\it the same} physical system. It is this type of analysis which allows us to understand what can be considered as an {\it objective representation} (see for a detailed analysis \cite{deRonde17b, deRondeMassri16}). Invariance is our thread of Ariadne in the quantum labyrinth of conceptual representation. But since this quantum labyrinth is clearly not classical, we should not expect classical concepts to help us finding the way out. And indeed, for almost a century now, no researcher entering this labyrinth has been able to escape with classical ideas. It seems we might need to think differently. Maybe we should leave aside our classical metaphysical prejudices and commitments and simply follow the thread. 

Taking into account our analysis about the importance of invariance within physical theories, let us consider the following simple question within QM: given a $\Psi$, what is independent of the relative observational choice of individual subjects according to the formalism? Or in other words, what is invariant with respect to mathematical contexts (or bases)? The answer for any quantum physicist who knows the orthodox formalism is obviously straightforward. The invariants in QM are the average value of observables considered in relation to that particular vector. The average value of an observable is independent of the particular context (or basis) that one might chose in order to calculate the computation. This is known by quantum physicists as the Born rule. \\

\noindent {\it
{\bf Born Rule:} Given a vector $\Psi$ in a Hilbert space, the following rule allows us to predict the average value of (any) observable $P$. 
$$\langle \Psi| P | \Psi \rangle = \langle P \rangle$$
This prediction is independent of the choice of any particular basis.}\\

From the previous considerations, we believe that a good starting point in order to derive an objective representation for QM must rely on the invariants of the formalism. Also, recalling Einstein's remark to Heisenberg \cite[p. 63]{Heis71} that: ``It is only the theory which decides what can be observed'', we understand it is the Born rule which provides the empirical constraints to consider what can be observed according to QM. Following this line of thought it makes then perfect sense to take Born's rule ---which provides not only the invariant elements of the formalism but also the constraints to observation--- very  seriously. This means for us to consider on equal footing ---irrespectively of any actualist metaphysical commitment--- both {\it certain predictions} (probability equal to unity) and {\it statistical predictions} (probability between zero and unity). Of course this implies the abandonment of the classical understanding of probability in terms of `ignorance about an actual state of affairs' and develop instead a new understanding of probability in terms of objective knowledge. But how to do so in relation to physical reality? We have argued elsewhere \cite{deRonde16a}, that this move requires the generalization of the famous definition of an {\it element of physical reality} presented in the famous EPR paper \cite{EPR}.\\

\noindent {\bf Element of Physical Reality:} {\it If, without in any way disturbing a system, we can predict with certainty (i.e., with probability equal to unity) the value of a physical quantity, then there exists an element of reality corresponding to that quantity.}\\

\noindent As remarked by Aerts and Sassoli de Bianchi \cite[p. 20]{AertsSassoli}: ``the notion of `element of reality' is exactly what was meant by Einstein, Podolsky and Rosen, in their famous 1935 article. An element of reality is a state of prediction: a property of an entity that we know is actual, in the sense that, should we decide to observe it (i.e., to test its actuality), the outcome of the observation would be certainly successful.'' Indeed, certainty and actuality where the restrictive constraints of what could be considered as physically real. However, there is a different path that can be considered, namely, to redefine the meaning of valuation itself beyond the actual realm through an ontological ---rather than epistemic--- generalization. This redefinition will also imply, as a direct consequence, the reconsideration of the meaning of quantum physical reality beyond actuality. Of course, our redefinition should keep the necessary relation between operational predictive statements and physical reality. But in this case, leaving aside both the actualist constraint imposed by certainty (probability equal to unity) and the (subjective) intromission of measurement (or observation) within the physical representation of the state of affairs. Following these considerations and constraints we can now put forward a generalized idea of what should be considered to be an element of physical reality in the specific context of QM \cite{deRonde16a}.\\

\noindent {\it {\bf Generalized Element of Physical Reality:} If we can predict in any way (i.e., both probabilistically or with certainty) the value of a physical quantity, then there exists an element of reality corresponding to that quantity.}\\

\noindent By extending the limits of what can be considered as physically real, we have also opened the door to a new understanding of QM beyond the representation provided by classical metaphysics in terms of systems composed by definite valued properties (see also \cite{deRonde17b}). 

Taking the Born rule as a standpoint, we will now consider a generalized notion of valuation which ---in line with our previous definition--- goes beyond the restrictive binary valuation imposed by actualist metaphysics and is grounded on the formalism of QM itself.\\

\noindent{\it
{\bf Global Intensive Valuation:} 
A Global Intensive Valuation (GIV) is a function from a graph to 
the closed interval $[0,1]$, that is, a GIV is an object in $\mathcal{G}ph|_{[0,1]}$.
}\\

\noindent Taking into account our intensive valuation, we can now proceed to discuss a new representation of physical reality beyond the notion of {\it Actual State of Affairs} which relates the set of properties $\mathcal{G}$ with the truth values $\{0,1\}$ according to the function $\Psi:\mathcal{G}\rightarrow\{0,1\}$ (section 1). We shall call this new conceptual representation of (quantum) physical reality a \emph{Potential State of Affairs} (PSA). 
To give the definition of a PSA, we need to introduce the graph of observables. Let $\mathcal{H}$ be a Hilbert space and let $\mathcal{G}=\mathcal{G}(\mathcal{H})$  be the set of observables. We give to $\mathcal{G}$ a graph structure by assigning an edge between
observables $P$ and $Q$ if and only if $[P,Q]=0$. Among all global intensive valuations we are interested in the particular class of PSA.
\begin{definition}
Let $\mathcal{H}$ be a Hilbert space.
A \emph{Potential State of Affairs}\footnote{A similar definition is discussed in \cite{Kalmbach}.} is a global intensive valuation
$\Psi:\mathcal{G}(\mathcal{H})\to[0,1]$ from the graph of observables $\mathcal{G}(\mathcal{H})$
such that $\Psi(I)=1$ and 
\[
\Psi(\sum_{i=1}^{\infty} P_i)=
\sum_{i=1}^\infty \Psi(P_i)\]
for any piecewise orthogonal projections $\{P_i\}_{i=1}^{\infty}$.
The numbers $\Psi(P) \in [0,1]$, are called {\it intensities} or {\it potentia}
and the nodes $P$ are called \emph{immanent powers} (or \emph{power}).
Hence, a PSA assigns a potentia to each power.
Notice that the definition of PSA is non-contextual. 
\end{definition}
Intuitively, we can picture a PSA
as a table,
\[
\Psi:\mathcal{G}(\mathcal{H})\rightarrow[0,1],\quad
\Psi:
\left\{
\begin{array}{rcl}
P_1 &\rightarrow &p_1\\
P_2 &\rightarrow &p_2\\
P_3 &\rightarrow &p_3\\
  &\vdots&
\end{array}
\right.
\]

\begin{theo}\label{teo1}
Let $\mathcal{H}$ be a separable Hilbert space, $\dim(\mathcal{H})>2$ and let $\mathcal{G}$ be the graph of immanent powers with the commuting relation given by QM.
\begin{enumerate}
\item Any positive semi-definite self-adjoint operator 
of the trace class $\rho$ determines in a bijective way
a PSA $\Psi:\mathcal{G}\to [0,1]$. 
\item Any GIV determines univocally a GBV such that the set of powers are considered as potentially existent. 
\end{enumerate}
\end{theo}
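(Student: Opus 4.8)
The plan is to prove the two assertions by separate arguments: the first reduces, via the identification of a PSA with a countably additive probability measure on the projection lattice, to Gleason's theorem; the second is a canonical ``support'' construction inside $\mathcal{G}ph|_{[0,1]}$. For (1), note first that a PSA $\Psi\colon\mathcal{G}\to[0,1]$ is, read on the projections (the immanent powers), exactly a normalized $\sigma$-additive measure on the projection lattice of $\mathcal{H}$: the clause $\Psi(I)=1$ fixes normalization and the stated additivity on piecewise orthogonal families is the $\sigma$-additivity. So the content to establish is that $\rho\mapsto\Psi_\rho$, where $\Psi_\rho(P):=\mathrm{Tr}(\rho P)$, is a bijection from the positive trace-class operators $\rho$ with $\mathrm{Tr}(\rho)=1$ onto the PSAs. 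I would first check the routine direction: given such a $\rho$, one has $\Psi_\rho(I)=\mathrm{Tr}(\rho)=1$, $0\le\mathrm{Tr}(\rho P)\le\mathrm{Tr}(\rho)=1$ since $0\le P\le I$, and $\sigma$-additivity follows from cyclicity of the trace together with the strong convergence $\sum_{i\le n}P_i\uparrow\sum_i P_i$; hence $\Psi_\rho$ is a PSA. Injectivity is trace-duality between trace-class and bounded operators: if $\mathrm{Tr}(\rho_1 P)=\mathrm{Tr}(\rho_2 P)$ for every projection $P$, then by linearity and the spectral theorem (finite real-linear combinations of projections are norm-dense in the bounded self-adjoint operators) $\mathrm{Tr}((\rho_1-\rho_2)A)=0$ for all bounded $A$, so $\rho_1=\rho_2$. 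Surjectivity is the one substantive input: it is precisely Gleason's theorem, in the form valid for a separable Hilbert space of dimension $>2$, that every such measure is of the form $\Psi_\rho$. This is exactly where the hypotheses $\dim(\mathcal{H})>2$ and separability are used, and the dimension restriction cannot be dropped (on a qubit there are probability measures on projections not given by a density operator).

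For (2), define the map $r\colon[0,1]\to\{0,1\}$ by $r(0)=0$ and $r(t)=1$ for $t\in(0,1]$. Since $\{0,1\}$ carries the complete-graph structure, \emph{any} function into it is a morphism of graphs, so postcomposition with $r$ sends an object $\Psi\colon\mathcal{G}\to[0,1]$ of $\mathcal{G}ph|_{[0,1]}$ to an object $\widetilde\Psi:=r\circ\Psi\colon\mathcal{G}\to\{0,1\}$, i.e.\ to a GBV. This GBV is univocally pinned down by $\Psi$ through the equivalence $\widetilde\Psi(P)=0\iff\Psi(P)=0$, and the set of powers it affirms, $\{P\in\mathcal{G}:\widetilde\Psi(P)=1\}=\{P:\Psi(P)>0\}$, is the support of the intensive valuation --- the powers carrying strictly positive potentia, i.e.\ those to be regarded as \emph{potentially} existent. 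One should record that $\widetilde\Psi$ need \emph{not} satisfy the additivity clause in the definition of an ASA --- indeed by the Kochen--Specker Theorem it generically cannot --- which is the whole point: the passage from actual to potential existence is exactly the move that relaxes the global-binary-valuation/additivity constraint.

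The single genuine obstacle is the surjectivity in part (1), which is Gleason's theorem (including its extension to countably additive measures on separable, possibly infinite-dimensional, Hilbert spaces); everything else is either a routine verification or, for part (2), an immediate consequence of the fact that $[0,1]$ and $\{0,1\}$ are complete graphs, so that set maps between them are automatically graph morphisms.
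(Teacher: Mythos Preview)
Your proof is correct and follows essentially the same route as the paper: for (1) both arguments use the Born rule $\Psi_\rho(P)=\mathrm{Tr}(\rho P)$ for the forward direction and invoke Gleason's theorem for the inverse, and for (2) both define the same support map $[0,1]\to\{0,1\}$ sending $0\mapsto 0$ and $(0,1]\mapsto 1$ and postcompose. You supply more detail than the paper (the explicit checks that $\Psi_\rho$ is a PSA, the separate injectivity argument via trace-duality, the normalization $\mathrm{Tr}(\rho)=1$, and the remark that $\widetilde\Psi$ cannot be an ASA by Kochen--Specker), but the underlying strategy is identical.
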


\begin{proof}
\begin{enumerate}
\item Using Born's rule, we can assign to each
observable $P\in\mathcal{G}$ the value $\mbox{Tr}(\rho P)\in[0,1]$.
Hence, we get a PSA $\Psi:\mathcal{G}\to[0,1]$.
Let us prove that this assignment is bijective. Let 
$\Psi:\mathcal{G}\to[0,1]$ be a PSA. By Gleason's theorem \cite{Gleason}
there exists a unique positive semi-definite self-adjoint operator 
of the trace class $\rho$
such that $\Psi$ is given by the Born rule with
respect to $\rho$.
\item Consider the function $\tau:[0,1]\to\{0,1\}$, 
where $\tau(t)=0$ if and only if $t=0$. Now, given a 
GIV $\Psi:\mathcal{G}\to[0,1]$, the map $\tau \Psi:\mathcal{G}\to\{0,1\}$
is a well-defined GBV. 
\end{enumerate}\qed
\end{proof}

\smallskip

The importance of Item 1 is the equivalence between a PSA and a \emph{ray} in Hilbert space (of dimension $>2$).\footnote{This is what is called following the ``minimal interpretation'' a {\it quantum state}. For a detailed analysis of the notion of quantum state in the topos approach see \cite{Eva}.} This shows that our logos approach captures completely the formal structure of the quantum formalism. Notice also that the map of Item 2 in Theorem \ref{teo1} is never, due to KS theorem, an ASA. The previous theorem also makes explicit the fact that we are loosing a lot of information when we impose binary valuations to the quantum formalism. Binary valuations contain much less information than intensive valuations. As shown by Theorem \ref{teo1} while a PSA implies univocally a GBV of potentially existent powers; the inverse doesn't hold, from a GBV we cannot derive a PSA.
$$PSA \Rightarrow GBV$$
$$GBV  \nRightarrow PSA$$

In analogous fashion to the classical case in which an ASA evolves in time, the PSA, as defined above, is perfectly well defined in terms of its observables and their respective intensities. The formalism also provides the evolution through Schr\"odinger's equation of motion.     
\[
\xymatrix{
\mathcal{G}_{t_1}\ar[rr]^{Sch}\ar[dr]_{\Psi_{t_1}}& &\mathcal{G}_{t_2}\ar[dl]^{\Psi_{t_2}}\\
&[0,1]
}
\]
\noindent Notice that in each instant of time $\mathcal{G}_{t_i}$ is defined in a univocal and non-contextual manner. 
In fact, the diagram above is capable of producing a picture different to both \emph{Schr\"odinger picture} (in which the state is fixed and the observables evolve) and to {\it Heisenberg picture} (in which the observables are fixed and the state evolves). Our diagram allows to consider a third possibility, namely, a picture in which {\it everything is evolving}. We leave a more detailed analysis of the logos picture for a future work \cite{deRondeMassri17b}.

Before discussing how our logos approach provides a new understanding of KS theorem bypassing the contextual character of {\it binary valuations}, let us provide a more detailed explanation of our {\it intensive valuations} in terms of graphs. 
\begin{definition}
Let $\mathcal{G}$ be a graph. We define a \emph{context} as a complete subgraph (or aggregate) inside $\mathcal{G}$. For example, let $P_1,P_2$ be two elements of $\mathcal{G}$. Then, 
$\{P_1, P_2\}$ is a contexts if $P_1$ is related to $P_2$, $P_1\sim P_2$. Saying it differently, if there exists an edge between $P_1$ and $P_2$. In general, a collection of elements $\{P_i\}_{i\in I}\subseteq \mathcal{G}$ determine a context if $P_i\sim P_j$ for all $i,j\in I$. Equivalently, if the subgraph with nodes $\{P_i\}_{i\in I}$ is complete.  A \emph{maximal} context is a context not contained properly in another context.  If we do not indicate the opposite, when we refer to contexts we will be implying maximal contexts.
\end{definition}

To visualize these mathematical definitions we can provide the following elements of some abstract graph $\mathcal{G}$,

\begin{center}
\includegraphics[width=14em]{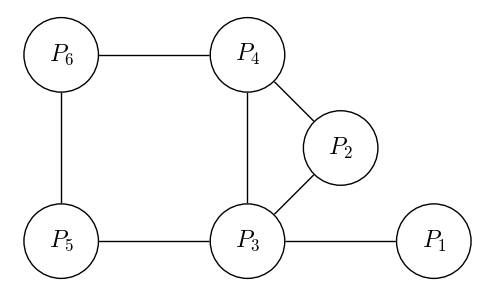}
\end{center}
Notice that $\mathcal{G}$ has much more structure than
just the set of nodes $\{P_1,\ldots,P_6\}$. 
For example, the lines (or edges) between $P_2,P_3$ and $P_4$
indicates that these nodes are mutually related.

If we want to consider a GIV, which is a map $\Psi:\mathcal{G} \to [0,1]$, we have to add the particular intensity of each $P_i \in \mathcal{G}$:

\begin{center}
\includegraphics[width=14em]{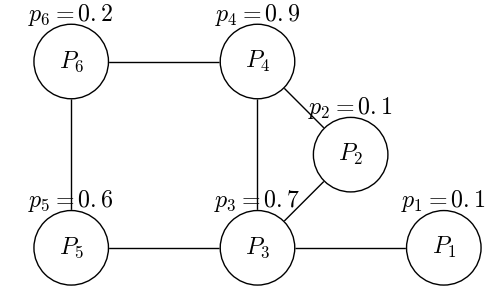}
\end{center}

\noindent For reasons that will become clear in the next sections, we will call the nodes of the graph \emph{powers}. The number over each node is the value under $\Psi$. Note that $\mathcal{G}$ is not complete. An alternative way of representing $\Psi$ is  by using intensive nodes or powers. In this representation we do not write the name of the power neither its exact potentia, 
\begin{center}
\includegraphics[width=14em]{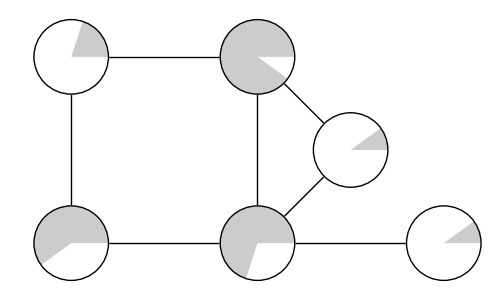}
\end{center}

\noindent With this representation, it is easy to understand
the notion of a context in complete generality.
In the next picture, we circle the maximal contexts.
\begin{center}
\includegraphics[width=14em]{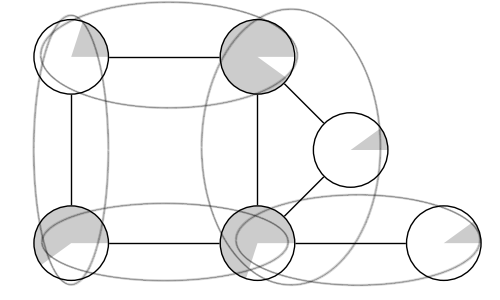}
\end{center}
Notice that the contexts have non-empty intersections.

\begin{example}
Consider again the example of QM.
Let $\mathcal{H}$ be Hilbert space.
Let $\mathcal{G}$ be the graph of immanent powers with the commuting relation 
given by QM
and let $\Psi:\mathcal{G}\rightarrow [0,1]$
be the PSA given through the Born rule with respect to some ray in $\mathcal{H}$.
In this case, the notion of context coincides
with the usual one; a complete set of commuting operators.
\end{example}

As we shall discuss in the next section, our logos approach not only provides a rigorous definition of contextuality which is both intuitive and visualizable, but more importantly, it allows us to escape KS contextuality and restore ---through the intensive valuation of powers to the interval $[0,1]$--- an objective (non-relative) account of the quantum formalism. For some, the price we have paid might seem too high; i.e. the abandonment of classical actualist metaphysics.

\section{The Non-Contextuality Intensive Theorem}

As we mentioned above, the problem of KS contextuality in relation to the definition of physical reality appears from the impossibility to have a ASA as related to the elements that can be possibly measured. Indeed, given a $\Psi$\footnote{We avoid using the term ``quantum state'' for reasons that will become clear in the following.}  the KS-impossibility to consider the state of affairs in terms of definite valued properties, 0 or 1, seemingly precludes an objective (non-relative) representation. However, if we go back to the KS proof discussed in section 4, our {\it global intensive valuation}  is able to provide a non-contextual account of projection operators which can be easily comprehended through the following graph.
\begin{center}
\includegraphics[width=16em]{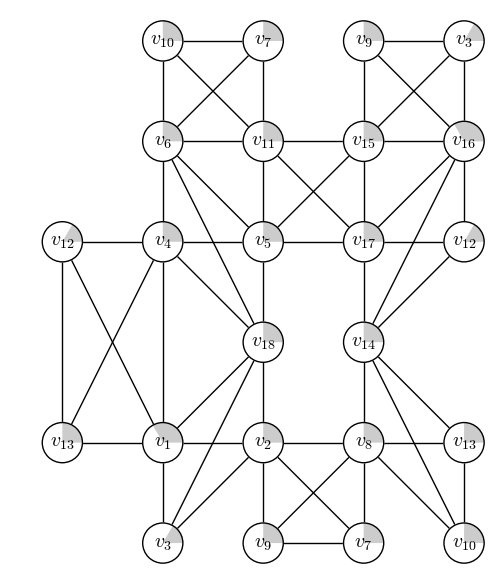}
\end{center}
The intensities drawn are $1/4$, $1/3$ or $1/6$.

By replacing (classical) binary valuations by intensive valuations we are able to relate $\Psi$  to a well defined state of affairs. We are able to now state the following non-contextuality theorem for QM:

\begin{theo} 
{\bf (Intensive Non-Contextuality)} If $\mathcal{H}$ is a Hilbert space, then a 
PSA is possible.
\end{theo}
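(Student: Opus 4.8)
The plan is to observe that this theorem is essentially an immediate consequence of Item~1 of Theorem~\ref{teo1}: to prove that a PSA \emph{exists} it suffices to exhibit one, and the Born rule manufactures it for free. Concretely, I would fix any positive semi-definite self-adjoint trace-class operator $\rho$ with $\mathrm{Tr}(\rho)=1$ --- the simplest choice being a pure state $\rho=|\psi\rangle\langle\psi|$ for an arbitrary unit vector $\psi\in\mathcal{H}$ --- and define $\Psi:\mathcal{G}(\mathcal{H})\to[0,1]$ by $\Psi(P)=\mathrm{Tr}(\rho P)$. Since $0\le P\le I$ for every projection $P$, one has $0\le\Psi(P)\le\Psi(I)=\mathrm{Tr}(\rho)=1$, so $\Psi$ indeed takes values in $[0,1]$ and satisfies $\Psi(I)=1$.

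The only point requiring a word of justification is countable additivity over a piecewise orthogonal family $\{P_i\}_{i=1}^{\infty}$. Here I would recall that $\sum_{i=1}^{\infty}P_i$ is by definition the strong-operator limit of the partial sums $\sum_{i=1}^{N}P_i$, which is itself a projection (onto the closure of the orthogonal direct sum of the ranges), and that the functional $B\mapsto\mathrm{Tr}(\rho B)$ is $\sigma$-additive on mutually orthogonal families of projections; equivalently, for the pure-state case, continuity of the inner product yields $\langle\psi|(\sum_i P_i)|\psi\rangle=\lim_N\sum_{i=1}^{N}\langle\psi|P_i|\psi\rangle=\sum_{i=1}^{\infty}\langle\psi|P_i|\psi\rangle$. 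Hence $\Psi(\sum_i P_i)=\sum_i\Psi(P_i)$, and since $\Psi$ is by construction a function defined on the whole graph $\mathcal{G}(\mathcal{H})$ of immanent powers, it is a GIV, and therefore a PSA.

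There is essentially no obstacle to overcome here, and that is precisely the content of the statement: replacing the two-valued target $\{0,1\}$ by the interval $[0,1]$ dissolves the Kochen--Specker obstruction. The parity/counting argument that forbids an ASA --- the Kochen--Specker theorem recalled above --- never gets off the ground once the values are allowed to range continuously, since there is no longer any constraint forcing the sum of the values over an orthonormal basis to be an integer distinct from $1$. Note also that, unlike Item~1 of Theorem~\ref{teo1} and unlike the Kochen--Specker theorem, no hypothesis $\dim(\mathcal{H})>2$ is needed: we only claim the \emph{existence} of a PSA, not the bijective correspondence with trace-class operators, so Gleason's theorem is not invoked and the construction above works verbatim also when $\dim(\mathcal{H})$ is $1$ or $2$.
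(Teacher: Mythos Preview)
Your argument is correct and follows exactly the paper's approach: the paper's proof is the single line ``Straightforward, apply the Born rule to obtain a PSA, which is a function to $[0,1]$,'' and you have simply spelled out what that entails --- choosing a density operator $\rho$, defining $\Psi(P)=\mathrm{Tr}(\rho P)$, and verifying the PSA axioms. Your additional remarks (that no $\dim(\mathcal{H})>2$ hypothesis is needed here, and why the parity obstruction dissolves) are correct and helpful elaborations not present in the paper.
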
 

\noindent  {\it Proof:} Straightforward, apply the Born rule 
to obtain a PSA, which is a function to  $[0,1]$. \qed \\

\noindent Notice that escaping the metaphysical constraints imposed by the actual realm, our representation and redefinition of how to valuate a quantum wave function, $\Psi$, restores an objective (non-contextual) account of the (intensive) values of all projection operators related to the decomposition of $\Psi$. As we have seen above, our approach can be derived directly from physical considerations (section 6) and the formalism itself in a very natural manner. This also provides a guide regarding the mathematical elements that should be considered in the creation of an objective conceptual representation. 

\begin{coro} 
The non-contextuality intensive theorem restores the possibility of an objective physical representation of any quantum wave function $\Psi$. Contrary to the orthodox interpretation of QM in terms of systems with properties (which imply a binary valuation), our conceptual representation of quantum physical reality is not relative to any particular context, it is global.
\end{coro}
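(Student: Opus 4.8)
The plan is to assemble the corollary from material already in place above: the Intensive Non-Contextuality theorem, the bijection of Theorem \ref{teo1}(1), and the remark --- recorded in the definition of a PSA --- that an intensive valuation is non-contextual by construction. The corollary is in this sense a packaging result, and the work is to make each of its three clauses precise.

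First I would invoke the Intensive Non-Contextuality theorem to obtain, for the wave function $\Psi$, the global intensive valuation $P \mapsto \langle \Psi | P | \Psi \rangle$ on the graph of immanent powers $\mathcal{G}(\mathcal{H})$. By the Born rule this number depends only on $P$ and on the ray spanned by $\Psi$ --- not on any choice of orthonormal basis, nor on which complete set of commuting observables one places $P$ in. This is exactly the content of the word \emph{global}: a single function defined once and for all on the whole graph, rather than a compatible family of local binary assignments that one must try, and by the Kochen--Specker theorem fail, to glue together.

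Second I would establish faithfulness, so that ``representation of $\Psi$'' is not a euphemism for ``partial information about $\Psi$''. By Theorem \ref{teo1}(1) the map sending a positive semi-definite trace-class operator $\rho$ to the PSA $P \mapsto \mbox{Tr}(\rho P)$ is a bijection onto the set of PSAs, the non-routine direction being Gleason's theorem \cite{Gleason}; restricted to rank-one projectors it identifies rays in $\mathcal{H}$ (for $\dim(\mathcal{H})>2$) with their PSAs. Hence no physical content of $\Psi$ is lost in passing to its PSA: the intensive representation is an exact re-encoding, not a coarse-graining. Combining this with the first step, the assignment of a PSA to $\Psi$ is both context-independent and complete --- which is what section 6 means by calling a representation \emph{objective}.

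Third, to pin down the contrast with the orthodox reading I would note that replacing the codomain $[0,1]$ by $\{0,1\}$ forces a GBV in the sense of section 1, hence (for $\dim(\mathcal{H})>2$) an object that cannot be additive over orthonormal bases, i.e. cannot be an ASA; this is precisely the Kochen--Specker theorem stated above, so the ``systems carrying definite-valued properties'' picture is irreducibly contextual. The genuinely delicate point --- and the one I expect to be the real obstacle --- is not mathematical but conceptual: one must justify that ``context-independent plus faithful'' earns the name \emph{objective}. I would discharge this exactly as section 6 does for Galilean- and Lorentz-covariant quantities in classical physics and relativity, namely by reading the Born rule as the relevant invariant, and then by appealing to the generalised notion of element of physical reality so that the values $\mbox{Tr}(\rho P) \in (0,1)$ count as objective features (\emph{potentia}) of the state of affairs on exactly the same footing as those equal to $1$.
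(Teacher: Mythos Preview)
Your argument is correct and subsumes the paper's own proof, but the two differ markedly in scope. The paper's proof is a single line: ``Straightforward, for any $\Psi$ there always exists a PSA defined over all $\mathcal{G}$.'' That is, the paper treats the corollary as an immediate restatement of the Intensive Non-Contextuality theorem and stops there.

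You go further in two respects. First, you invoke Theorem~\ref{teo1}(1) and Gleason's theorem to argue that the PSA is a \emph{faithful} encoding of the ray of $\Psi$, so that ``representation'' is not a euphemism for coarse-graining; the paper does not raise this point in the proof of the corollary (though the bijection is established earlier). Second, you unpack the ``contrary to the orthodox interpretation'' clause by explicitly contrasting with the KS obstruction for GBVs, and you address head-on the conceptual step from ``context-independent plus faithful'' to ``objective'' via the invariance discussion of section~6. The paper leaves all of this implicit in the surrounding prose rather than in the proof itself. What your approach buys is a genuine argument for each clause of the corollary; what the paper's approach buys is brevity, treating the corollary as a label for conclusions already drawn rather than a statement requiring separate justification.
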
 

\noindent  {\it Proof:} Straightforward, for any $\Psi$ there always exists a PSA 
defined over all $\mathcal{G}$. \qed \\

To sum up, we have been able to escape the contextual character of KS theorem, not by changing the formalism in order to restore our classical way of thinking about {\it what there is}, but on the very contrary, by changing the way we think about {\it what there is} in order to restore an objective representation of what QM is really talking about. We are now ready to discuss the conceptual account of our logos approach to QM. From our perspective, the key to finding an adequate representation rests in the possibility to create a set of concepts that match the formalism as well as its main non-classical features. We believe that by adding the conceptual scheme that derives from the notion of {\it immanent intensive power} our logos approach can provide an intuitive understanding of the meaning of intensive valuations. It is through this new conceptual architectonic, which we have discussed extensively in \cite{deRonde16a, deRonde17a, deRonde17b, deRonde17c, deRonde17d}, that we hope to be able to provide a coherent understanding of what QM is really talking about.

\section{An Intuitive Conceptual Understanding of the Logos Approach}

There exists within the literature different viewpoints regarding the problem of understanding QM. The solutions vary with the different standpoints and goals assumed by the different approaches. On the one hand, instrumentalists claim that ``there is nothing to be further explained or understood in QM'' simply because the theory ---considered as an algorithm of prediction--- already provides the correct probabilistic predictions of ``clicks'' in detectors, and that should be the end of the story \cite{FuchsPeres00}. On the other hand, the orthodox perspective within the literature is that the task of philosophers of QM is to ``bridge the gap'' between the quantum formalism and our classical representation of the world \cite{Dorato15}. For example, John Hawthorne \cite[p. 144]{Hawthorne09} argues: ``[A] natural question to ask is how the familiar truths about the macroscopic world that we know and love (`the manifest image') emerge from the ground floor described by the fundamental book of the world. Assuming that we don't wish to concede that most of our ordinary beliefs about the physical world are false, we seem obliged to make the emergence of the familiar world from the ground floor intelligible to ourselves.'' This perspective regarding QM has focused its attention in two main types of reductionistic problems. Firstly, ``limit-problems'' which attempt to explain the path from the quantum formalism to classical physics in terms of a mathematical limit. This is, the well known ``quantum to classical limit'', which can be dissected into more specific problems, such as the basis problem (dealing with the process by which nature chooses one basis rather than other) and the measurement problem (dealing with ``weird'' quantum superpositions and their measurement outcomes). Secondly, there is also a set of ``no-problems''\footnote{I am thankful to Bob Coecke for this linguistic insight. Cagliari, July 2014.} which analyze QM in terms of classical notions, producing an analysis which takes as a standpoint the main concepts of classical physics. These problems ground themselves on the classical account of physical reality and only reflect about the formalism in ``negative terms''; that is, in terms of the failure of QM to account for the classical representation of reality and the use of its concepts: separability, space, time, locality, individuality, identity, actuality, etc. These ``negative'' set of problems are: {\it non-}separability, {\it non-}locality, {\it non-}individuality, {\it non-}identity, etc. To summarize, the orthodox (Bohrian) perspective of ``understanding QM'' necessarily implies a reduction of the theory to our ``common sense'' classical representation of the world. The project is to justify how QM can be related to our already known classical world.  

The classical representation provided by physical theories rests on the existence of physical systems which can be mainly characterized in terms of properties to which one can always apply truth binary valuations. But the orthodox formalism of QM resists ---mainly due to its non-commutative structure--- such an actualist metaphysical interpretation. So far, the conceptual representation imported from classical theories does not seem to work in the quantum domain without either giving up objectivity ---as in the Bohrian scheme \cite{deRonde17b}--- or abandoning the orthodox formalism itself ---like in the case of Bohmian mechanics and GRW, among others. Within the logos approach, we have been able to present a solution which remains ---contrary to the topos approach--- within the gates of the quantum formalism, escaping through {\it intensive valuations} the relativism imposed by {\it binary valuations} on KS analysis ---i.e., the conclusion that quantum reality changes when being observed or not (e.g., see \cite{Butterfield17}). In a single phrase: instead of giving up objectivity, we prefer to give up binary existence. Indeed, our solution is grounded on the redefinition of quantum physical reality beyond the classical (actualist) realm of definite binary valued properties. ``Understanding QM'' implies in our approach to supplement the mathematical formalism with adequate physical concepts that should be capable of explaining not only what the theory talks about but also the experience it presupposes \cite{deRonde16b, deRonde17a}. This means we also need to provide an intuitive grasp of our proposed {\it global intensive valuation}. Without such {\it anschaulich} understanding we will remain only at the level of a ``formal solution'' of KS contextuality. In this same respect, our goal must be to try to find an adequate physical notion which is able to be defined and understood in the intensive terms of a probabilistic measure. This also means, to shift from the classical understanding of {\it classical probability} as `inaccurate knowledge of a state of affairs' to an understanding of {\it quantum probability} in terms of `objective knowledge of a state of affairs' \cite{deRonde16c}. In the previous section we called {\it powers} to each of the elements of $\mathcal{G}$, $P_i$. We believe that this notion can be conceptually developed in order to capture the idea of intensive existence already present within the orthodox quantum formalism \cite{deRonde17a, deRonde17b, deRonde17c, deRonde17d}. 

{\it The notion of immanent power can be understood intuitively as an intensive notion}. One can imagine the existence of a power in relation to a specific potentia which characterizes the intensity or strength of a power.  A power, unlike a property of a system, must be characterized in terms of an intensive existence which measures the potentia of the power. For example, we might argue that Messi has the power of shooting penalties with a degree of accuracy of 0.95. This means that he will score approximately 95 out of 100 penalties. The number 0.95 has an intuitive grasp for anyone who has played a sport. Neymar might have a potentia of shooting penalties of 0.87, and this is why one might better choose Messi to shoot a penalty rather than Neymar. However, if we consider a particular situation Neymar might score while Messi might fail to do so. Particular effectuations do not contradict the statistical causality that we find in probabilistic knowledge. According to our approach, QM describes statistical causality, not particular actual effectuations. 

{\it Our proposed GIV escapes the KS constrain imposed by BV}. The price we are willing to pay in order to restore an objective picture is to give up the consideration of physical reality in terms of (actualist) binary existence. Intensive valuations open the door to a generalized characterization of what can be considered as physically real in truly potential terms. In classical physics, the ontological level has been always exclusively considered in terms of the actual realm. In this case actual observation collapses with the actual mode of existence. The single (actual) observation of a property (in the actual realm of existence) is enough to characterize the property completely. On the contrary, a power existing in the potential realm, cannot be characterized through a single observation. To characterize completely the power $P_i$ we require a statistical measure which indicates its potentia $p_i$. The power possesses an intensive existence which, contrary to classical properties is not either 0 or 1, but a number pertaining to the closed interval $[0,1]$. Thus, to characterize a single power we will require a statistical measure of many actual effectuations. 

{\it The notion of immanent power also captures in a natural way the contextual character of quantum measurements avoiding relativist choices which explicitly change the representation of the state of affairs.} Indeed, in order to measure the potentia of a specific power, we need to prepare a specific context. For example, if we want to measure the power of scoring penalties of Messi or Neymar, then we will obviously need a football field with a goal and a goalkeeper. Notice that this is in no way different to a Stern-Gerlach measurement which also requires the construction of a definite measurement situation. Obviously, if Messi shoots only one penalty we will not get enough information in oder to produce an objective statistical measure of the power in question. Let us also remark that this is completely analogous to the classical case in which objects also assume that a partial perception (or observation) is not enough to characterize the whole. In fact, in each particular observation we only gain access to viewing a partial profile (or {\it adumbration}) of the object in question. For example, if we see a table from above, we will only see `the top of the table'; but we will not be able to infer what type of legs it has. In order to learn what type of legs it has we will obviously need to observe the table from a different angle. Thus, in order to gain a complete knowledge of the whole table we will necessarily require many different observations. As in the case of the table, the measurements we can perform on a specific power change in no way its ontological existence. Each measurement can be regarded as a partial adumbration of the power in question. 

It is also easy to understand through the notion of immanent power the epistemic and ontic aspects of contextuality. For example, the context for scoring penalties is epistemically incompatible to that of throwing corners. Obviously, Messi cannot through a corner and shoot a penalty at the same time. Exactly the same contextual aspect is found in Stern-Gerlach experiments since we cannot measure simultaneously `spin in x-direction' and `spin in the y-direction'. In order to understand more clearly the relation between contextuality, measurements and preexistence, it is important to recall from \cite{deRonde16c} the following distinction:\\

\noindent {\it
{\sc Epistemic Incompatibility of Measurements:} Two contexts are epistemically incompatible if their measurements cannot be performed simultaneously.}\\

\noindent {\it 
{\sc Ontic Incompatibility of Existents:} Two contexts are ontically incompatible if their formal elements cannot be considered as simultaneously preexistent.}

\

\noindent Even though some powers are {\it epistemically incompatible} (i.e., they require mutually incompatible measurement set ups in order to be observed) they are never {\it ontologically incompatible} since they can be all defined to exist simultaneously through our GIV. The {\it Potential State of Affairs}, constituted by the set of potentially existent intensive immanent powers, is in this respect completely objective (i.e., independent of any subjective choice). Notice that this is completely analogous to the way in which the notion of {\it Actual State of Affairs}, as constituted by sets of actually definite valued properties, is also regarded as objective. Objective means in this case that the multiple observations of the same state of affairs are represented in terms of a coherent whole. The main difference between a ASA and a PSA regards the {\it conditions of objectivity}. While in the classical case an ASA is defined in term of a set of systems with definite valued properties; in the quantum case the PSA is defined in intensive terms, through a set of immanent powers with definite potentia. What must be clearly recognized is that in the classical case we are not discussing about the same objects of inquiry as in the quantum case.  

So, while Messi and Neymar possess a list of definite powers or skills, each one of them with a definite potentia, the measurement of each power requires not only a specific context of inquiry, some of which are epistemically incompatible (see for a more detailed analysis: \cite{deRonde16c}); it also requires a statistical measure of each power. In order to visualize our ideas we can present the list of {\it intensive powers} possessed by Messi and Neymar quantified in terms of their precise {\it potentia}.

\begin{footnotesize}
\begin{center}
\begin{tabular}{ | l | c | c |}
\hline
Powers & Messi & Neymar\\ \hline
\hline
Ball Control &0.96&0.88\\ \hline
Dribbling &0.97&0.90\\ \hline
Low Pass &0.94&0.77\\ \hline
Lofted Pass &0.97&0.75\\ \hline
Finishing &0.98&0.86\\ \hline
Header &0.67&0.64\\ \hline
Defensive Prowess &0.45&0.40\\ \hline
Kicking Power &0.88&0.78\\ \hline
Speed &0.95&0.90\\ \hline
Explosive Power &0.98&0.92\\ \hline
Body Control &0.97&0.91\\ \hline
Stamina &0.85&0.80\\ \hline
Injury Resistance &0.7&0.4\\ \hline
\end{tabular}
\end{center}
\end{footnotesize}

\bigskip 

Both the non-contextual existence of the powers $P_i$ and their potentia,  together with the contextual character of measurements of such powers can be clearly visualized through a single graph.\footnote{The fact that even in classical physics we can find epistemically incompatible measurement situations has been discussed by Diederik Aerts in \cite{Aerts82}.}\\ 

\begin{center}
\begin{minipage}{.2\linewidth}
\includegraphics[width=\linewidth]{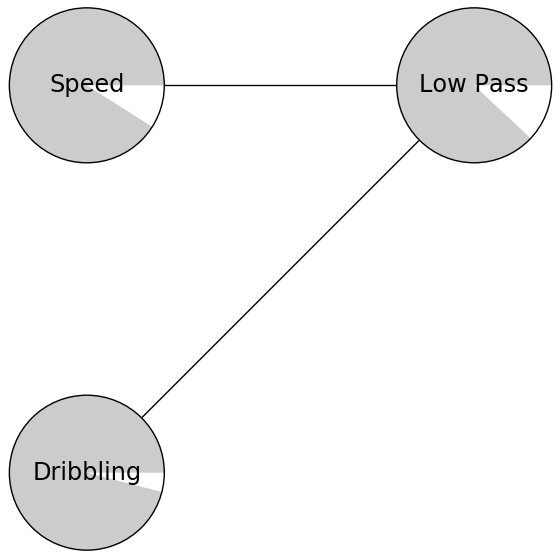}
\end{minipage}%
\qquad
\begin{minipage}{.2\linewidth}
\includegraphics[width=\linewidth]{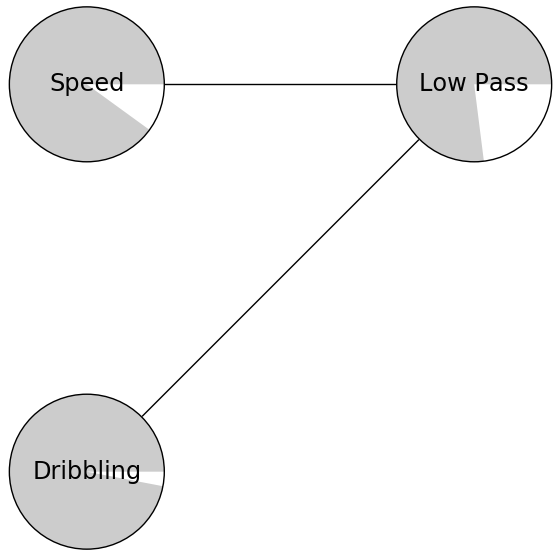}
\end{minipage}
\end{center}

\bigskip 

\noindent Within our approach quantum contexts appear as an epistemic constrain to the possibility of measuring powers simultaneously. Notice that a Stern Gerlach apparatus in a lab can be also considered as a situation where there exists, in the potential realm, a set of ontological non-contextual powers some of which are epistemically incompatible or contextual.\\ 

\begin{center}
\includegraphics[width=9em]{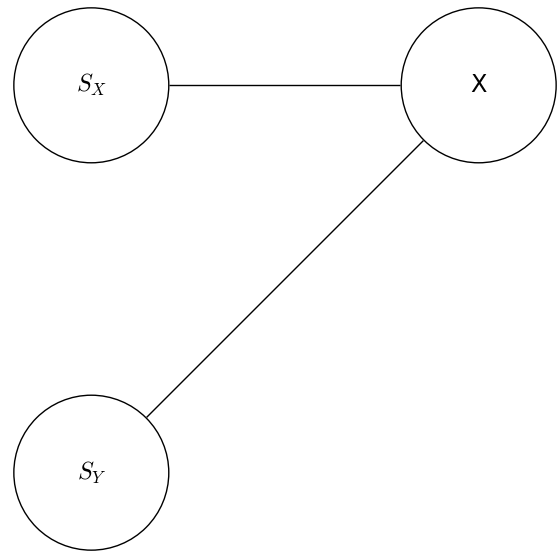}

\end{center} 

\bigskip 

\noindent By complementing our logos approach of {\it global intensive valuations} with the conceptual scheme of {\it powers} and {\it potentia} we are able to present an intuitive grasp of what QM might be talking about. Of course, this is only a first step within a line of research in which there is still much more work to be done in the future.

\section*{Conclusions}

We presented a new categorical approach which captures the contextual nature of QM in a natural manner, allowing us to escape KS contextuality and the relativism imposed by the actualist account of binary valuations. Our non-contextuality intensive theorem explains how KS contextuality can be understood as an epistemic feature of the theory, bypassing the idea that quantum measurements create reality. We have been able to escape the contextual character of KS theorem, not by changing the formalism in order to restore our classical way of thinking about {\it what there is}, but on the very contrary, by changing the way we think about {\it what there is} in order to restore an objective representation of what QM is really talking about. In this respect, we provided an {\it anschaulich} objective account of the logos approach in terms of intensive immanent powers. We have shown how our scheme can be intuitively understood through the aid of a new conceptual scheme grounded on a potential realm of existence completely independent of the actual realm.

\section*{Acknowledgements} 

This work was partially supported by the following grants: FWO project G.0405.08 and FWO-research community W0.030.06. CONICET RES. 4541-12 and the Project PIO-CONICET-UNAJ (15520150100008CO) ``Quantum Superpositions in Quantum Information Processing''.

\end{document}